\documentclass[onecolumn]{IEEEtran}
\usepackage{algorithm}
\usepackage{algpseudocode}
\usepackage{times,amssymb,amsmath,amsfonts,nicefrac,float,accents,color,
graphicx,bbm,caption,subcaption,mathrsfs,amsthm,stmaryrd,soul}
\usepackage[noadjust]{cite}
\usepackage{tikz}
\usetikzlibrary{calc}
\usepackage[mathscr]{eucal}
\usepackage{arydshln}

\newtheorem{theorem}{Theorem}[section]
\newtheorem{lemma}[theorem]{Lemma}
\newtheorem{proposition}[theorem]{Proposition}

\newtheorem{definition}[theorem]{Definition}

\newtheorem{remark}[theorem]{Remark}

\newtheorem{fact}[theorem]{Fact}
\newtheorem*{MDS}{MDS Conjecture}

\newcommand{\tb}{\textbf}

\begin{document}
\title{New Bounds on the Field Size for Maximally Recoverable Codes Instantiating Grid-like Topologies}
\author{Xiangliang Kong, Jingxue Ma and Gennian Ge
\thanks{The research of G. Ge was supported by the National Natural Science Foundation of China under Grant Nos.11431003 and 61571310, Beijing Scholars Program, Beijing Hundreds of Leading Talents Training Project of Science and Technology, and Beijing Municipal Natural Science Foundation.}
\thanks{X. Kong is with the School of Mathematical Sciences, Capital Normal University, Beijing 100048, China (e-mail: 2160501011@cnu.edu.cn).}
\thanks{J. Ma is with the School of Mathematical Sciences, Zhejiang University, Hangzhou 310027, China (email: majingxue@zju.edu.cn).}
\thanks{G. Ge is with the School of Mathematical Sciences, Capital Normal University, Beijing 100048, China (e-mail: gnge@zju.edu.cn).}
}

\date{}\maketitle

\begin{abstract}
  In recent years, the rapidly increasing amounts of data created and processed through the internet resulted in distributed storage systems employing erasure coding based schemes. Aiming to balance the tradeoff between data recovery for correlated failures and efficient encoding and decoding, distributed storage systems employing maximally recoverable codes came up. Unifying a number of topologies considered both in theory and practice, Gopalan et al. \cite{Gopalan2017} initiated the study of maximally recoverable codes for grid-like topologies.

  In this paper, we focus on the maximally recoverable codes that instantiate grid-like topologies $T_{m\times n}(1,b,0)$. To characterize the property of codes for these topologies, we introduce the notion of \emph{pseudo-parity check matrix}. Then, using the Combinatorial Nullstellensatz, we establish the first polynomial upper bound on the field size needed for achieving the maximal recoverability in topologies $T_{m\times n}(1,b,0)$. And using hypergraph independent set approach, we further improve this general upper bound for topologies $T_{4\times n}(1,2,0)$ and $T_{3\times n}(1,3,0)$. By relating the problem to generalized \emph{Sidon sets} in $\mathbb{F}_q$, we also obtain non-trivial lower bounds on the field size for maximally recoverable codes that instantiate topologies $T_{4\times n}(1,2,0)$ and $T_{3\times n}(1,3,0)$.

\medskip
\noindent {{\it Keywords and phrases\/}: maximally recoverable codes, grid-like topologies, pseudo-parity check matrix, hypergraph independent set, distributed storage systems.
}\\
\smallskip

\end{abstract}

%\begin{keywords}
%Locally repairable codes, disjoint repair group, weakly independent set, partial spread, distributed storage systems
%\end{keywords}

\section{Introduction}\label{secintro}

With rapidly increasing amounts of data created and processed in internet scale companies such as Google, Facebook, and Amazon, the efficient storage of such copious amounts of data has thus become a fundamental and acute problem in modern computing. This resulted in distributed storage systems relying on distinct storage nodes. Modern large scale distributed storage systems, such as data centers, used to store data in a redundant form to ensure reliability against node failures. However, this strategy entails large storage overhead and is nonadaptive for modern systems supporting the ``Big Data" environment.

To ensure the reliability with better storage efficiency, erasure coding schemes are employed, such as in Windows Azure \cite{5} and in Facebook's Hadoop cluster \cite{14}. However, in traditional erasure coding scheme, if one node fails, which is the most common failure scenario, we may recover it by accessing a large amount of the remaining nodes. This is a time consuming recovery process. To address this efficiency problem, a lot of works have emerged in two aspects: local regeneration and local reconstruction.

The concept of local regeneration was introduced by Dimakis et al. \cite{DGWR07}. They established a tradeoff between the repair bandwidth and the storage capacity of a node, and introduced a new family of codes, called regenerating codes, which attained this tradeoff. The concept of local reconstruction was introduced by Gopalan et al. \cite{Gopalan12}, and they initiated the study of Local Reconstruction Codes (LRCs). We say a certain node has \emph{locality $r$} if it can be recovered by accessing only $r$ other nodes, and LRCs are linear codes with all-symbol \emph{locality $r$}. In recent years, the theory of regenerating codes and LRCs has developed rapidly. There have been a lot of related works focusing on the bounds and the constructions of optimal codes, see \cite{WDK07,WD09,RSK11,Song2017,Tamo1408,Tamo1406,Wang1411,11,16,Prakash1406,Huang2016,Shahabinejad2016} and the reference therein.

The notion of maximally recoverable property was first introduced by Chen et al. \cite{CHL07} for multi-protection group codes, and then extended by Gopalan et al. \cite{Gopalan2014} to general settings. In \cite{Gopalan2014}, the authors introduced the \emph{topology} of the code to specify the supports for the parity check equations, and they also obtained a general upper bound on the minimal size of the field over which maximally recoverable (MR) codes exist.

Different from the parity check matrix, the topology of the code only specifies the number of redundant symbols and the data symbols on which the redundant ones depend. This makes the topology a crucial characterization of the structure of the code used under distributed storage settings. With the purpose of deploying longer codes in storage, Gopalan et al. \cite{Gopalan2017} proposed a family of topologies called \emph{grid-like topologies}, which unified a number of topologies considered both in theory and practice.

Consider an $m\times n$ matrix, each entry storing a data from a finite field $\mathbb{F}$. Every row satisfies a given set of $a$ parity constraints, and every column satisfies a given set of $b$ parity constraints. In addition, there are $h$ global parity constraints that involve all $mn$ entries from the matrix. The topology of the code under these three constraints is denoted by $T_{m\times n}(a,b,h)$. In \cite{Gopalan2017}, the authors considered the maximal recoverable codes for general grid-like topologies, and they established a super-polynomial lower bound on the field size needed for achieving maximal recoverability in any grid-like topologies $T_{m\times n}(a,b,h)$ with $a,b,h\geq 1$. They also tried to characterize correctable erasure patterns for grid-like topologies of the form $T_{m\times n}(a,b,0)$, and obtained a full combinatorial characterization for the case of $T_{m\times n}(1,b,0)$.

The general lower bound given in \cite{Gopalan2017} is obtained from the case of a basic topology $T_{n\times n}(1,1,1)$, where the lower bound requires field size $q=2^{\Omega((\log n)^2)}$. Recently, by relating the problem to the independence number of the Birkhoff polytope graph, Kane et al. \cite{KLR17} improved the lower bound to $q\geq2^{(\frac{n}{2}-2)}$ using the representation theory of the symmetric group. They also obtained an upper bound $q\leq 2^{3n}$ using recursive constructions.

As for other related works, Gandikota et al. \cite{GGTZ17} considered the maximal recoverability for erasure patterns of bounded size. Shivakrishna et al. \cite{SRLS2018} considered the recoverability of a special kind of erasure patterns called extended erasure patterns for topologies $T_{(m+m')\times n}(2,b,0)$. It is worth noting that, Gopi et al. \cite{Gopi2018} recently obtained a super-linear lower bound for maximally recoverable LRCs which can be viewed as the MR codes for topology $T_{\frac{n}{r}\times r}(a,0,h)$.

In this paper, we focus on the maximally recoverable codes that instantiate topologies of the form $T_{m\times n}(1,b,0)$, which can be regarded as tensor product codes of column codes with a single parity constraint and row codes with $b$ parity constraints. In order to describe the parity constraints globally, we introduce the notion of \emph{pseudo-parity check matrix}, which can be viewed as a generalization of the parity check matrix. Based on this, using tools from extremal graph theory and additive combinatorics, we prove the following results:
\begin{itemize}
  \item The first polynomial upper bound on the minimal size of the field required for the existence of MR codes that instantiate topologies $T_{m\times n}(1,b,0)$:
      \begin{equation*}
      q\leq C_0(m,b)\cdot n^{2b(m-1)}+n^{(b-1)},
      \end{equation*}
      where $C_0(m,b)=(m+1)!\cdot{m\cdot b(m-1)\choose \leq2b(m-1)}$;
  \item Further improved upper bounds on the field size required for MR codes instantiating topologies $T_{4\times n}(1,2,0)$ and $T_{3\times n}(1,3,0)$:
      \begin{equation*}
      q\leq\mathcal{O}(\frac{n^{5}}{\log (n)});
      \end{equation*}
  \item A polynomial lower bound on the minimal size of the field required for MR codes instantiating topologies $T_{4\times n}(1,2,0)$:
      \begin{equation*}
      q\geq\frac{(n-3)^{2}}{4}+2,
      \end{equation*}
       and a linear lower bound on the minimal size of the field required for MR codes instantiating topologies $T_{3\times n}(1,3,0)$:
      \begin{equation*}
      q\geq\frac{\sqrt{n^2-11n+34}}{2}.
      \end{equation*}
\end{itemize}

The paper is organised as follows: In Section II, we give the formal definitions for general topologies, grid-like topologies and maximal recoverability, we also include some known results for topologies $T_{m\times n}(a,b,0)$ and the tools from hypergraph independent set. In Section III, we introduce the notion of \emph{pseudo-parity check matrix} and regular irreducible erasure patterns. In Section IV, we present our proof for the general polynomial upper bound on the minimal size of the field required for the existence of MR codes that instantiate topologies $T_{m\times n}(1,b,0)$. In Section V, we improve the general upper bound for MR codes that instantiate topologies $T_{4\times n}(1,2,0)$ and $T_{3\times n}(1,3,0)$, and we also establish non-trivial lower bounds for both cases. In Section VI, we conclude our work and list some open problems.

\section{Preliminaries}\label{secpre}

\subsection{Notation}

We use the following standard mathematical notations throughout this paper.

\begin{itemize}
  %\item Let $[n]=\{1,2,\cdots,n\}$ and $[a,b]=\{a,a+1,\cdots,b\}$ if $a\leq b$ are two integers.
  \item Let $q$ be the power of a prime $p$, $\mathbb{F}_q$ be the finite field with $q$ elements, $\mathbb{F}_{q}^{n}$ be the vector space of dimension $n$ over $\mathbb{F}_q$ and $\mathbb{F}_{q}^{m\times n}$ be the collection of all $m\times n$ matrices with elements in $\mathbb{F}_q$.
  \item For any vector $\tb{v}=(v_1,\cdots,v_n)\in \mathbb{F}_{q}^{n}$, let ${\rm supp}(\tb{v})=\{i\in [n] : v_i\neq 0\}$ and ${\omega}(\tb{v})=|{\rm supp}(v)|.$ For a set $S\subseteq[n],$ define $\tb{v}|_S=(v_{i_1},\ldots,v_{i_{|S|}})$, where $i_j\in S$ for $1\le j\le |S|$ and $1\le i_1<\cdots<i_{|S|}\le n$.
  %\item $d(\tb{u},\tb{v})$ stands for Hamming distance between any two vectors $\tb{u},\tb{v} \in \mathbb{F}_{q}^{n}$.
  \item $[n,k,d]_q$ denotes a linear code of length $n$, dimension $k$ and distance $d$ over the field $\mathbb{F}_q$. We will write $[n,k,d]$ instead of $[n,k,d]_q$ when the particular choice of the field is not important.
  \item Let $\mathcal{C}$ be an $[n,k,d]$ code and $S\subseteq[n]$, $|S|=k$. We say that $S$ is an information set if the restriction $\mathcal{C}|_{S}=\mathbb{F}_q^k$.
%  \item Let $U\subseteq\mathbb{F}_{q}^{n}$ be a set and $\tb{v}$ be a vector. Define $d(U,\tb{v})=\min\{d(\tb{u},\tb{v})| \tb{u}\in U\}.$
  \item An $[n,k,d]$ code is called Maximum Distance Separable (MDS) if $d=n-k+1$. Particularly, an $[n,k,d]$ code is MDS if and only if every subset of its $k$ coordinates is an information set. Alternatively, an $[n,k,d]$ code is MDS if and only if it corrects any collection of $(n-k)$ simultaneous erasures (see \cite{Macwilliams1977}).
  \item Let $\mathcal{C}_1$ be an $[n_1,k_1,d_1]$ code and $\mathcal{C}_2$ be an $[n_2,k_2,d_2]$ code. The tensor product $\mathcal{C}_1\otimes \mathcal{C}_2$ is an $[n_1n_2,k_1k_2,d_1d_2]$ code such that the codewords of $\mathcal{C}_1\otimes \mathcal{C}_2$ are matrices of size $n_1\times n_2$, where each column belongs to $\mathcal{C}_1$ and each row belongs to $\mathcal{C}_2$. If $U\subseteq [n_1]$ is an information set of $\mathcal{C}_1$ and $V\subseteq [n_2]$ is an information set of $\mathcal{C}_2$, then $U\times V$ is an information set of $\mathcal{C}_1\otimes \mathcal{C}_2$ (see \cite{Macwilliams1977}).
  \item Let $\mathbf{I}_n$ be the $n\times n$ identity matrix. And let $\tb{1}_n$ and $\tb{0}_{n}$ be the all-one and all-zero vectors, respectively.
\end{itemize}

\subsection{Maximal recoverability for general topologies}

Let $z_1,\ldots,z_m$ be variables over the field $\mathbb{F}_q$. Consider an $(n-k)\times n$ matrix $P=\{p_{ij}\}$ where each $p_{ij}\in\mathbb{F}_p[z_1,\ldots,z_m]$ is an affine function of the $z_i$s over $\mathbb{F}_p$:
\begin{equation}\label{parityforgeneraltopologies}
p_{ij}(z_1,\ldots,z_m)=c_{ij0}+\sum\limits_{k=1}^m c_{ijk}z_k,~~~~c_{ijk}\in\mathbb{F}_p.
\end{equation}
We refer the matrix $P$ as a topology. Fix an assignment $\{z_i=\alpha_i\}_{i=1}^{m}$, where $\alpha_i\in\mathbb{F}_q$. Viewing $P(\alpha_1,\ldots,\alpha_m)$ as a parity check matrix, then it defines a linear code which is denoted by $\mathcal{C}(\alpha_1,\ldots,\alpha_m)$. And we say code $\mathcal{C}$ instantiates $P$. A set $S\subseteq[n]$ of columns of $P$ is called \emph{potentially} \emph{independent} if there exists an assignment $\{z_i=\alpha_i\}_{i=1}^{m}$ where $\alpha_i\in\mathbb{F}_q$ such that the columns of $P(\alpha_1,\ldots,\alpha_m)$ indexed by $S$ are linearly independent.

\begin{definition}\cite{Gopalan2014}\label{defGMR}
The code $\mathcal{C}(\alpha_1,\ldots,\alpha_m)$ instantiating the topology $P$ is called maximally recoverable if every set of columns that is potentially independent in $P$ is linearly independent in $P(\alpha_1,\ldots,\alpha_m)$.
\end{definition}

Using the \emph{Sparse Zeros Lemma} (see Theorem 6.13 in \cite{LN83}), Gopalan et al. \cite{Gopalan2014} proved the following upper bound on the size of field over which the maximally recoverable codes for any topologies $P$ exist.

\begin{theorem}\cite{Gopalan2014}\label{upbGMR}
Let $P\in(\mathbb{F}_p[z_1,\ldots,z_m])^{(n-k)\times n}$ be an arbitrary topology. If $q>(n-k)\cdot{n\choose {\leq n-k}}$, then there exists an MR instantiation of $P$ over the field $\mathbb{F}_q$.
\end{theorem}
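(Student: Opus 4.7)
The plan is to reduce the MR condition to the non-vanishing of a single auxiliary polynomial in $z_1,\ldots,z_m$ and then invoke the Sparse Zeros Lemma (Theorem 6.13 of \cite{LN83}) to extract a good assignment. Recall that this lemma guarantees that a nonzero polynomial of total degree $d$ in $m$ variables over $\mathbb{F}_q$ has strictly fewer than $q^m$ zeros whenever $q>d$.

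First, I would attach to every potentially independent subset $S\subseteq[n]$ of size $r\leq n-k$ a \emph{witness minor}: by the definition of potential independence, some $r\times r$ minor of $P(z)|_S$ must be nonzero at some assignment $(\alpha_1,\ldots,\alpha_m)$, so it is a nonzero polynomial $f_S\in\mathbb{F}_p[z_1,\ldots,z_m]$. Since every entry of $P$ is an affine function of the $z_i$'s, $f_S$ has total degree at most $r\leq n-k$.

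Next, I would form the product $F(z)=\prod_{S} f_S(z)$ over all potentially independent subsets $S$. As a product of nonzero polynomials over a field, $F$ is itself nonzero, and its total degree is bounded by $\sum_{r=0}^{n-k} r\binom{n}{r}\leq(n-k)\binom{n}{\leq n-k}$. Under the hypothesis $q>(n-k)\binom{n}{\leq n-k}$, the Sparse Zeros Lemma then produces an assignment $(\alpha_1,\ldots,\alpha_m)\in\mathbb{F}_q^m$ at which $F$, and hence every $f_S$, is nonzero. The non-vanishing of $f_S(\alpha)$ exhibits an invertible $r\times r$ submatrix of $P(\alpha)|_S$, so the columns indexed by $S$ are linearly independent in $P(\alpha_1,\ldots,\alpha_m)$; by Definition \ref{defGMR}, this instantiation is maximally recoverable.

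I anticipate no serious obstacle in carrying out this plan. The two points needing care are (i) arguing that for each potentially independent $S$ at least one full-size minor is a polynomially nonzero element of $\mathbb{F}_p[z_1,\ldots,z_m]$, which is essentially a restatement of the definition of potential independence, and (ii) keeping the degree bookkeeping sharp enough to recover exactly $(n-k)\binom{n}{\leq n-k}$ rather than the looser bound $(n-k)\binom{n}{\leq n-k}$ one would get by overcounting subset sizes as $n-k$ uniformly in the product step (which still works but wastes nothing only after the inequality $\sum_r r\binom{n}{r}\leq (n-k)\binom{n}{\leq n-k}$).
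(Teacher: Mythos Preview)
Your proposal is correct and follows exactly the approach the paper attributes to \cite{Gopalan2014}: the paper does not spell out its own proof of this cited theorem but explicitly states that it is obtained via the Sparse Zeros Lemma (Theorem~6.13 in \cite{LN83}), which is precisely your plan of multiplying witness minors over all potentially independent column sets and bounding the total degree by $(n-k)\binom{n}{\leq n-k}$. Your degree bookkeeping and the application of the lemma are sound.
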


\subsection{Grid-like topologies}

Unifying and generalizing a number of topologies considered both in coding theory and practice, Gopalan et al. \cite{Gopalan2017} proposed the following family of topologies called \emph{grid-like topologies} via dual constraints.

\begin{definition}\cite{Gopalan2017}\label{GLT}
Let $m\leq n$ be integers. Consider an $m\times n$ array of symbols $\{x_{ij}\}_{i\in[m],j\in[n]}$ over the field $\mathbb{F}_q$. Let $0\leq a\leq m-1$, $0\leq b\leq n-1$, and $0\leq h\leq (m-a)(n-b)-1$. Let $T_{m\times n}(a,b,h)$ denote the topology where there are $a$ parity check equations per column, $b$ parity check equations per row, and $h$ global parity check equations that depend on all symbols. Topologies of the form $T_{m\times n}(a,b,h)$ are called grid-like topologies.

Furthermore, we say a collection of arrays $\mathcal{C}$ in $\mathbb{F}_q^{m\times n}$ to be a code that instantiates the topology $T_{m\times n}(a,b,h)$, if there exist $\{\alpha_i^{(k)}\}_{i\in[m],k\in[a]}$, $\{\beta_j^{(k)}\}_{j\in[n],k\in[b]}$ and $\{\gamma_{ij}^{(k)}\}_{i\in[m],j\in[n],k\in[h]}$ in $\mathbb{F}_q$ such that for each codeword $C=(c_{ij})_{i\in[m],j\in[n]}\in \mathcal{C}$:

1. Each column $j\in[n]$ satisfies the constraints
\begin{equation}\label{columnconstrains}
\sum_{i=1}^{m}\alpha_{i}^{(k)}c_{ij}=0,~~\forall k\in[a].
\end{equation}

2. Each row $i\in[m]$ satisfies the constraints
\begin{equation}\label{rowconstrains}
\sum_{j=1}^{n}\beta_{j}^{(k)}c_{ij}=0,~~\forall k\in[b].
\end{equation}

3. All the symbols satisfy $h$ global constraints
\begin{equation}\label{globalconstrains}
\sum_{i=1}^{m}\sum_{j=1}^{n}\gamma_{ij}^{(k)}c_{ij}=0,~~\forall k\in[h].
\end{equation}

%A setting of $\{\alpha_i^{(k)}\}$, $\{\beta_j^{(k)}\}$ and $\{\gamma_{ij}^{(k)}\}$ from $\mathbb{F}_q$ specifies a code $\mathcal{C}=\mathcal{C}(\{\alpha_i^{(k)}\}, \{\beta_j^{(k)}\}, \{\gamma_{ij}^{(k)}\})$ that instantiates the topology $T_{m\times n}(a,b,h)$.
\end{definition}

\begin{definition}\label{correctableEP}
An erasure pattern is a set $E\subseteq [m]\times [n]$ of symbols. Pattern $E$ is correctable for the topology $T_{m\times n}(a,b,h)$ if there exists a code instantiating the topology where the variables $\{x_{ij}\}_{(i,j)\in E}$ can be recovered from the parity check equations (\ref{columnconstrains}), (\ref{rowconstrains}) and (\ref{globalconstrains}).
\end{definition}

Clearly, constraints in (\ref{columnconstrains}) and (\ref{rowconstrains}) guarantee the local dependencies in each column and row respectively, and constraints in (\ref{globalconstrains}) ensure some additional recoverability. Notably, constraints (\ref{columnconstrains}) specify a code $\mathcal{C}_{col}\subseteq \mathbb{F}_q^{m}$ and constraints (\ref{rowconstrains}) specify a code $\mathcal{C}_{row}\subseteq \mathbb{F}_q^n$. If $h=0$, i.e., there are no extra global constraints for all symbols, then the code specified with the settings from Definition \ref{GLT} is exactly the tensor product code $\mathcal{C}_{col}\otimes \mathcal{C}_{row}$.

\begin{definition}\label{MRforGLT}
A code $\mathcal{C}$ that instantiates the topology $T_{m\times n}(a,b,h)$ is Maximally Recoverable (MR) if it can correct every failure pattern that is correctable for the topology.
\end{definition}

The maximally recoverability requires a code that instantiates the topology $T_{m\times n}(a,b,h)$ to have many good properties, especially the MDS property.

\begin{proposition}\cite{Gopalan2017}\label{MDSforMR}
Let $\mathcal{C}$ be an MR instantiation of the topology $T_{m\times n}(a,b,h)$. We have

1. The dimension of $\mathcal{C}$ is given by
\begin{equation}
dim~\mathcal{C}=(m-a)(n-b)-h.
\end{equation}
Moreover,
\begin{equation}
dim~\mathcal{C}_{col}=m-a~~and~~dim~\mathcal{C}_{row}=n-b.
\end{equation}

2. Let $U\subseteq [m]$, $|U|=m-a$ and $V\subseteq [n]$, $|V|=n-b$ be arbitrary. Then $\mathcal{C}|_{U\times V}$ is an
\begin{equation*}
[(m-a)(n-b),~(m-a)(n-b)-h,~h+1]
\end{equation*}
MDS code. Any subset $S\subseteq U\times V$, $|S|=(m-a)(n-b)-h$ is an information set.

3. Assume
\begin{equation}
h\leq (m-a)(n-b)-\max{\{(m-a),(n-b)\}},
\end{equation}
then the code $\mathcal{C}_{col}$ is an $[m,m-a,a+1]$ MDS code and the code $\mathcal{C}_{row}$ is an $[n,n-b,b+1]$ MDS code. Moreover, for all $j\in [n]$, $\mathcal{C}$ restricted to column $j$ is the code $\mathcal{C}_{col}$, and for all $i\in [m]$, $\mathcal{C}$ restricted to row $i$ is the code $\mathcal{C}_{row}$.
\end{proposition}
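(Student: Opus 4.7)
The plan is to reduce each part to exhibiting an instantiation of $T_{m\times n}(a,b,h)$ with the desired property, invoking Definition 2.5: since the MR code $\mathcal{C}$ corrects every potentially correctable pattern, any property realizable by \emph{some} instantiation is also enjoyed by $\mathcal{C}$. Theorem 2.3 (existence of an MR instantiation via the Sparse Zeros Lemma) will serve as a blanket source of such instantiations.

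For Part 1, I would bound $\dim \mathcal{C}$ from below by counting independent parity checks: the $an$ column and $bm$ row equations cut out $\mathcal{C}_{col}\otimes \mathcal{C}_{row}$, whose codimension is at most $an+bm-ab$ (the $ab$ accounts for the tensor-product overlap in each $a\times b$ sub-block), and the $h$ global constraints further reduce dimension by at most $h$, giving $\dim \mathcal{C}\geq (m-a)(n-b)-h$. The matching upper bound is realized by the generic instantiation from Theorem 2.3, whose parity-check matrix has maximum rank; by the MR property every instantiation attains the same dimension. From $\dim \mathcal{C}=(m-a)(n-b)-h$ one reads off $\dim \mathcal{C}_{col}=m-a$ and $\dim \mathcal{C}_{row}=n-b$, since any deficiency in the column or row parity rank would push the lower bound strictly higher.

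For Part 2, fix $U\subseteq [m]$, $V\subseteq [n]$ of sizes $m-a$ and $n-b$ and any $S\subseteq U\times V$ with $|S|=(m-a)(n-b)-h$; I need to show the erasure pattern $E=([m]\times [n])\setminus S$ is correctable by $\mathcal{C}$, which by MR reduces to exhibiting one instantiation of the topology that corrects $E$. Taking Reed--Solomon codes for $\mathcal{C}_{col}$ and $\mathcal{C}_{row}$ makes $U\times V$ an information set of $\mathcal{C}_{col}\otimes \mathcal{C}_{row}$ (by the tensor fact recalled in Section II.A), so all erasures outside $U\times V$ are handled; choosing the $h$ global coefficients $\gamma^{(k)}_{ij}$ generically over a sufficiently large extension makes the induced code on $U\times V$ an $[(m-a)(n-b),(m-a)(n-b)-h,h+1]$ MDS code, so the remaining $h$ erasures inside $U\times V$ are corrected. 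Such a generic choice exists by a Schwartz--Zippel-type non-vanishing argument of the sort underlying Theorem 2.3.

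Part 3 is where the hypothesis $h\leq (m-a)(n-b)-\max\{m-a,n-b\}$ is really used, and I expect this step to be the main obstacle. I would first prove $\mathcal{C}|_{\text{column }j}=\mathcal{C}_{col}$ for every $j\in [n]$, with the row statement following symmetrically. Fix $V\ni j$ with $|V|=n-b$ and $U\subseteq [m]$ with $|U|=m-a$; by Part 2, $\mathcal{C}|_{U\times V}$ is an $[(m-a)(n-b),(m-a)(n-b)-h,h+1]$ MDS code. The hypothesis gives $m-a\leq (m-a)(n-b)-h$, so by the MDS property the $m-a$ coordinates $U\times\{j\}$ form a subset of some information set, and the projection of $\mathcal{C}|_{U\times V}$ onto $U\times\{j\}$ is surjective onto $\mathbb{F}_q^{m-a}$. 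Hence $\mathcal{C}|_{\text{column }j}$ restricted to $U$ is all of $\mathbb{F}_q^{m-a}$; together with $\mathcal{C}|_{\text{column }j}\subseteq \mathcal{C}_{col}$ and $\dim \mathcal{C}_{col}=m-a$ from Part 1, equality follows. Running the same argument for every $U$ of size $m-a$ shows that every such $U$ is an information set of $\mathcal{C}_{col}$, so $\mathcal{C}_{col}$ is $[m,m-a,a+1]$ MDS; the row statement uses the other half of the hypothesis, $h\leq (m-a)(n-b)-(n-b)$. The technical crux is the clean interplay between the MDS projection bound from Part 2 and the numerical hypothesis on $h$; without it the projection step would drop rank and the argument would collapse.
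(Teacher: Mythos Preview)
The paper does not prove this proposition: it is quoted verbatim from \cite{Gopalan2017} and stated without proof, so there is no argument in the paper to compare your proposal against.

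That said, your outline is essentially sound and follows the natural route. A couple of minor clarifications are worth making. In Part~1, the sentence ``by the MR property every instantiation attains the same dimension'' is not quite right: different instantiations can have different dimensions, but the MR instantiation has the \emph{minimum} possible dimension (equivalently, its parity-check matrix has maximum rank), which together with your lower bound $\dim\mathcal{C}\geq (m-a)(n-b)-h$ and the existence of one instantiation achieving equality gives the result. In Part~2, the phrase ``the induced code on $U\times V$'' deserves one more line: because $U\times V$ is an information set of the tensor product, each global constraint pulls back to a linear form on $\mathbb{F}_q^{U\times V}$, and one checks these pulled-back forms can be made generic by, e.g., supporting the $\gamma^{(k)}_{ij}$ on $U\times V$ and choosing them as a parity-check matrix of an MDS code there; this is what makes the Schwartz--Zippel step go through. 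In Part~3, your surjectivity-of-projection argument is correct; just note explicitly that $(\mathcal{C}|_{\text{col }j})|_U=\mathbb{F}_q^{m-a}$ forces $\dim(\mathcal{C}|_{\text{col }j})\geq m-a=\dim\mathcal{C}_{col}$, whence equality of the two codes (not merely of their restrictions to $U$).
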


Considering the topology $T_{m\times n}(a,b,0)$, the MR code $\mathcal{C}$ that instantiates this topology can be viewed as the tensor product code $\mathcal{C}_{col}\otimes \mathcal{C}_{row}$. Based on the MDS properties for both $\mathcal{C}_{col}$ and $\mathcal{C}_{row}$, for a corresponding erasure pattern, we know that if some column has less than $a+1$ erasures or some row has less than $b+1$ erasures, we can decode it. Therefore, the erasure pattern that really matters shall have at least $a+1$ erasures in each column and at least $b+1$ erasures in each row.

\begin{definition}\label{irr}
An erasure pattern $E\subseteq [m]\times[n]$ for the topology $T_{m\times n}(a,b,0)$ is called irreducible, if for any $(i,j)\in E$, $|I(j)|=|\{i'\in [m]: (i',j)\in E\}|\geq a+1$ and $|J(i)|=|\{j'\in [n]: (i,j')\in E\}|\geq b+1$.
\end{definition}

These kinds of patterns were originally mentioned in \cite{Gopalan2017} and also appeared in \cite{SRLS2018}. While Gopalan et al. \cite{Gopalan2017} were trying to characterize the correctable erasure patterns for grid-like topologies, they considered the natural question: \emph{are irreducible patterns uncorrectable?} In order to address this question, they introduced the following notion of \emph{regularity} for erasure patterns.

\begin{definition}\cite{Gopalan2017}\label{REP}
Consider the topology $T_{m\times n}(a,b,0)$ and an erasure pattern $E$. We say that $E$ is regular if for all $U\subseteq [m]$, $|U|=u$ and $V\subseteq [n]$, $|V|=v$ we have
\begin{equation}\label{regularity}
|E\cap (U\times V)|\leq va+ub-ab.
\end{equation}
\end{definition}

%Noted that the regularity for erasure patters is in fact a necessary condition for being correctable (see Lemma 3.2 \cite{Gopalan2017}).
By reducing the regular erasure patterns to the irreducible case, the authors proved the following equivalent condition of the correctable erasure patterns for the topology $T_{m\times n}(1,b,0)$.

\begin{theorem}\cite{Gopalan2017}\label{Gop17}
An erasure pattern $E$ is correctable for the topology $T_{m\times n}(1,b,0)$ if and only if it is regular for $T_{m\times n}(1,b,0)$.
\end{theorem}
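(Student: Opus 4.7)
The plan is to prove the two directions of the equivalence separately.

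For the necessity direction (correctable implies regular), I would use a dimension-counting argument inside each sub-grid. Because the correctability of $E$ is witnessed by some instantiation and in particular by an MR instantiation (which exists by Theorem~\ref{upbGMR}), Proposition~\ref{MDSforMR} applies: the MR code factors as $\mathcal{C} = \mathcal{C}_{col} \otimes \mathcal{C}_{row}$ with $\mathcal{C}_{col}$ an $[m,m-1,2]$ MDS code and $\mathcal{C}_{row}$ an $[n,n-b,b+1]$ MDS code. For any $U \subseteq [m]$ and $V \subseteq [n]$ with $u \geq 1$ and $v \geq b$, the subspace $\mathcal{K}=\{c\in\mathcal{C} : \mathrm{supp}(c)\subseteq U\times V\}$ equals $(\mathcal{C}_{col}\cap\mathbb{F}_q^U)\otimes(\mathcal{C}_{row}\cap\mathbb{F}_q^V)$, which by the MDS property has dimension $(u-1)(v-b)$. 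Correctability of $E$ forces $\mathcal{K}\cap\mathbb{F}_q^{E\cap(U\times V)}=\{0\}$, and a standard dimension count in $\mathbb{F}_q^{U\times V}$ yields
\begin{equation*}
  |E\cap(U\times V)| \leq uv - (u-1)(v-b) = v+b(u-1),
\end{equation*}
which is exactly the regularity bound. The boundary cases $u=0$, $v=0$, or $v<b$ are either vacuous or handled directly by row decodability.

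For the sufficiency direction (regular implies correctable), my plan has two stages. In the first stage, I reduce to irreducible patterns: if some column of $E$ contains at most one erasure, the single column parity check of $\mathcal{C}_{col}$ recovers it, and if some row contains at most $b$ erasures, the distance-$(b+1)$ MDS row code $\mathcal{C}_{row}$ recovers them. The residual pattern $E'$ is a subset of $E$, hence still regular; iterating either fully decodes $E$ or leaves an irreducible regular pattern. In the second stage, for an irreducible regular pattern, I would combine the irreducibility bounds $|E|\geq 2|V|$ and $|E|\geq (b+1)|U|$ (where $U,V$ are the row and column supports of $E$) with the regularity bound $|E|\leq |V|+b(|U|-1)$ to extract tight structural constraints on the triple $(|U|,|V|,|E|)$. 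Using these, I would produce an instantiation correcting $E$ via a rank/genericity argument: the parity-check matrix of $\mathcal{C}$, restricted to the columns indexed by $E$, can be forced to have rank $|E|$ by a suitable choice of the coefficients $\{\alpha_i\}$ of $\mathcal{C}_{col}$ and $\{\beta_j^{(k)}\}$ of $\mathcal{C}_{row}$. Showing that a relevant $|E|\times|E|$ determinant is a non-zero polynomial in these coefficients, and then invoking the Sparse Zeros Lemma over a sufficiently large field (as in Theorem~\ref{upbGMR}), would complete the proof.

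The main obstacle is the second stage of sufficiency. With $a=1$, local decoding is unavailable once the pattern is irreducible, so the argument must be inherently global: regularity at a single sub-grid is not enough, and one must argue that the coefficient polynomial does not vanish identically \emph{because} regularity holds simultaneously at all sub-grids. Verifying this non-vanishing requires a careful structural or inductive analysis of how an irreducible erasure pattern interacts with the tensor-product parity structure, and it is precisely what makes the minimal field size required for such an instantiation potentially very large --- a question that the remainder of this paper then attacks quantitatively.
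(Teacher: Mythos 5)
Your necessity direction (correctable $\Rightarrow$ regular) is correct and complete. Passing to an MR instantiation via Theorem~\ref{upbGMR} and applying Proposition~\ref{MDSforMR}, you obtain the tensor-product structure $\mathcal{C}=\mathcal{C}_{col}\otimes\mathcal{C}_{row}$ with both component codes MDS; the subspace of codewords supported in $U\times V$ is the tensor product of the shortened codes, of dimension $(u-1)(v-b)$; correctability forbids nonzero codewords supported in $E\cap(U\times V)$; and the resulting dimension count in $\mathbb{F}_q^{U\times V}$ yields $|E\cap(U\times V)|\le uv-(u-1)(v-b)=v+b(u-1)$, which is exactly (\ref{regularity}) with $a=1$. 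This is the same dimension-counting argument as in~\cite{Gopalan2017}.

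Your sufficiency direction has the right skeleton but a genuine gap at precisely the place where the theorem lives. Stage~1 (iteratively peel off columns with at most one erasure and rows with at most $b$ erasures, noting that a subset of a regular pattern is regular, until the residue is irreducible) is sound and matches the reduction the present paper attributes to~\cite{Gopalan2017}. In Stage~2, however, you declare that ``showing a relevant $|E|\times|E|$ determinant is a nonzero polynomial \ldots would complete the proof,'' and then in your final paragraph you concede you do not know how to establish this non-vanishing. That concession is the missing proof. Since $a=1$ fixes $\mathcal{C}_{col}$ to be (up to scaling) the simple parity code, the only free parameters are the row-code coefficients $\{\beta_j^{(k)}\}$, and the entire content of this direction is to show that regularity of $E$ on \emph{every} sub-grid forces the corresponding determinant in these variables to be not identically zero. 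The Sparse Zeros Lemma can only upgrade ``not identically zero'' to ``has a nonvanishing evaluation over a large field''; it cannot supply the non-vanishing itself. In~\cite{Gopalan2017} this step is carried out by a dedicated combinatorial analysis of irreducible regular patterns; without such an argument, your proposal proves one direction and outlines, but does not complete, the other.
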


\subsection{Independent sets in hypergraphs}

A hypergraph is a pair $(V,\mathfrak{E})$, where $V$ is a finite set and $\mathfrak{E}\subseteq 2^{V}$ is a family of subsets of $V$. The elements of $V$ are called vertices and the subsets in $\mathfrak{E}$ are called hyperedges. An independent set of a hypergraph is a set of vertices containing no hyperedges and the independence number of a hypergraph is the size of its largest independent set.

There are many results on the independence number of hypergraphs obtained through different methods (see \cite{AKPSS1982}, \cite{AKS1980}, \cite{DLR1995}, \cite{Kostochka2014}). In the following section, we will apply the general lower bound derived by Kostochka et al. \cite{Kostochka2014}. Before stating their theorem, we need a few definitions and notations. Let $H(V,\mathfrak{E})$ be a hypergraph with vertex set $V$ and hyperedge set $\mathfrak{E}$. We call $H$ a $k$-uniform hypergraph, if all the hyperedges have the same size $k$, i.e., $\mathfrak{E}\subseteq {V\choose k}$. For any vertex $v\in V$, we define the degree of $v$ to be the number of hyperedges containing $v$, denoted by $d(v)$. The maximum of the degrees of all the vertices is called the maximum degree of $H$ and denoted by $\Delta(H)$. The independence number of $H$ is denoted by $\alpha(H)$. For a set $R$ of $r$ vertices, define the $r$-degree of $R$ to be the number of hyperedges containing $R$.

\begin{theorem}\cite{Kostochka2014}\label{KMV14}
Fix $r\geq 2$. There exists $c_r>0$ such that if $H$ is an $(r+1)$-graph on $n$ vertices with maximum $r$-degree $\Delta_r<n/(\log n)^{3r^2}$, then
\begin{equation}\label{hypergraphind}
\alpha(H)\geq c_r(\frac{n}{\Delta_r}\log{\frac{n}{\Delta_r}})^{\frac{1}{r}},
\end{equation}
where $c_r>0$ and $c_r\sim r/e$ as $r\rightarrow \infty$.
\end{theorem}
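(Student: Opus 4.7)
The plan is to prove this via a combination of the probabilistic deletion method and the semi-random (Rödl nibble) method, which is the standard route for extending Ajtai--Komlós--Pintz--Spencer--Szemerédi type bounds to hypergraphs with bounded codegree. I would first warm up with a vanilla deletion argument to see what bound it yields and why it misses the $(\log)^{1/r}$ factor. Include each vertex independently with probability $p$, forming a random set $S$, and let $X=|S|$, $Y$ be the number of hyperedges fully contained in $S$. The $r$-degree hypothesis gives $|\mathfrak{E}|\leq n\Delta_r/(r+1)$, so $\mathbb{E}[Y]\leq n\Delta_r p^{r+1}/(r+1)$ while $\mathbb{E}[X]=pn$. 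Deleting one vertex per bad hyperedge leaves an independent set of expected size at least $pn-n\Delta_r p^{r+1}/(r+1)$, and optimizing $p$ yields $\alpha(H)\gtrsim(n/\Delta_r)^{1/r}$ --- correct order, missing logarithm.

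To recover the logarithm I would iterate via the nibble. At each stage, sample a small random subset at rate $\epsilon$, delete any chosen vertex lying in a bad hyperedge with other chosen vertices, add the survivors to the growing independent set, and then remove them together with all incident hyperedges. The key technical task is to track how the remaining vertex count $n_t$ and the current maximum $r$-degree $\Delta_{r,t}$ evolve, showing that both shrink geometrically while the ratio $n_t/\Delta_{r,t}$ grows, so that after $\Theta(\log(n/\Delta_r))$ rounds a telescoping sum of the per-round gains $\Theta((n_t/\Delta_{r,t})^{1/r})$ produces the claimed $c_r(n/\Delta_r\cdot\log(n/\Delta_r))^{1/r}$ bound. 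The asymptotic constant $c_r\sim r/e$ drops out of a standard Stirling-type computation on per-round survival probabilities combined with the usual $(1-\epsilon)^{\Delta}\approx e^{-\epsilon\Delta}$ estimate that controls how many vertices survive a nibble.

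The main obstacle is the concentration step. The quantities $n_t$ and $\Delta_{r,t}$ are degree-$(r+1)$ polynomials of independent Bernoullis, so one must carefully bound their Lipschitz constants and certifiable changes before invoking Talagrand's inequality (or a bounded-differences martingale) to keep them pinned to their expected trajectories across all rounds simultaneously. The hypothesis $\Delta_r<n/(\log n)^{3r^2}$ is precisely the quasi-randomness threshold that makes the variances small enough to sustain concentration through $\Theta(\log(n/\Delta_r))$ iterations --- without it, the codegree can drift unpredictably and the whole nibble collapses. So my energy would go mostly into this bookkeeping; the first-moment and deletion calculations are essentially forced once the concentration is in hand.
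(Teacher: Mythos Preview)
The paper does not prove this theorem at all: it is quoted verbatim from Kostochka--Mubayi--Verstra\"{e}te \cite{Kostochka2014} and used purely as a black-box tool in the proofs of Theorems~\ref{upbT(1,4,0)} and~\ref{upbT(1,3,0)}. There is no argument in the paper for you to compare against; the authors simply invoke the bound~\eqref{hypergraphind} after checking that the relevant $6$-uniform hypergraph has bounded $5$-degree.

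Your sketch is a reasonable outline of how the result is actually established in the original reference (deletion for the crude $(n/\Delta_r)^{1/r}$ bound, then a semi-random nibble to recover the logarithmic factor, with the codegree hypothesis driving the concentration), but it goes well beyond what this paper requires or supplies. If your goal is to match the paper, a one-line citation suffices.
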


\section{Pseudo-parity check matrix and Regular irreducible erasure patterns}

In this section, we shall introduce two important notions: \emph{pseudo-parity check matrix} and regular irreducible erasure patterns, which are crucial in the proofs of both upper and lower bounds.

\subsection{Pseudo-parity check matrix}

Let $\mathcal{C}$ be an $[n,k]$ linear code with a parity check matrix $\mathbf{H}\in \mathbb{F}_q^{(n-k)\times n}$, then we have the following well-known fact about $\mathbf{H}$.

\begin{fact}\label{fact1}\cite{Macwilliams1977}
Assume a subset $E\subseteq [n]$ of the coordinates of $\mathcal{C}$ are erased, then they can be recovered if and only if the parity check matrix $\mathbf{H}$ restricted to coordinates in $E$ has full rank.
\end{fact}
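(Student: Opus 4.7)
The plan is to reformulate the recovery question as a rank condition on the column submatrix $\mathbf{H}|_E$ (the $(n-k)\times|E|$ matrix obtained by keeping only the columns of $\mathbf{H}$ indexed by $E$). My first step is to split any codeword $\mathbf{c}\in\mathcal{C}$ according to $E$ and $[n]\setminus E$: writing $\mathbf{H}\mathbf{c}^T = \mathbf{0}$ in block form yields
\[
\mathbf{H}|_E\,(\mathbf{c}|_E)^T \;=\; -\,\mathbf{H}|_{[n]\setminus E}\,(\mathbf{c}|_{[n]\setminus E})^T,
\]
where the right-hand side is determined entirely by the surviving coordinates. Unique recovery of the erased symbols is thus equivalent to uniqueness of the solution of this linear system in the unknown $\mathbf{c}|_E$.

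For the ``only if'' direction I would argue the contrapositive. Suppose $\mathbf{H}|_E$ does not have full column rank. Pick a nonzero $\mathbf{v}\in\ker(\mathbf{H}|_E)$ and extend it by zeros on $[n]\setminus E$ to form a nonzero word $\tilde{\mathbf{c}}\in\mathbb{F}_q^n$. Then $\mathbf{H}\tilde{\mathbf{c}}^T=\mathbf{0}$ and $\mathrm{supp}(\tilde{\mathbf{c}})\subseteq E$, so $\mathbf{c}$ and $\mathbf{c}+\tilde{\mathbf{c}}$ are two distinct codewords that agree on $[n]\setminus E$; no decoder can resolve this ambiguity from the surviving symbols.

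For the ``if'' direction, if $\mathbf{H}|_E$ has full column rank, then the displayed linear system admits at most one solution in $\mathbf{c}|_E$. Existence of a solution is automatic, since the transmitted codeword itself provides one, and thus $\mathbf{c}|_E$ is uniquely determined by $\mathbf{c}|_{[n]\setminus E}$.

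There is essentially no obstacle: the argument is a block-matrix manipulation together with the standard equivalence between ``trivial kernel'' and ``full column rank.'' The only point worth making explicit is the interpretation of ``full rank'': since $\mathbf{H}|_E$ has $n-k$ rows and $|E|$ columns, the correct reading is full \emph{column} rank, which forces $|E|\le n-k$ and recovers the familiar fact that a code with minimum distance $d$ corrects any $d-1$ erasures.
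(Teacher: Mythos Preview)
Your argument is correct and is the standard linear-algebra proof of this classical fact. Note, however, that the paper does not supply its own proof of this statement: it is stated as a well-known fact with a citation to MacWilliams--Sloane, so there is no ``paper's proof'' to compare against. Your reading of ``full rank'' as full \emph{column} rank is the intended one (and indeed the paper's subsequent Proposition on the pseudo-parity check matrix makes this explicit), and your block-decomposition argument together with the kernel/ambiguity contrapositive is exactly the usual justification.
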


%From Definition \ref{GLT} and Proposition \ref{MDSforMR}, the MR code $\mathcal{C}$ that instantiates the topology $T_{m\times n}(a,b,0)$ is an $[mn,(m-a)(n-b),ab+a+b+1]$ linear code, which is also the tensor product of an $[m,m-a,a+1]$ MDS code $\mathcal{C}_{col}$ and an $[n,n-b,b+1]$ MDS code $\mathcal{C}_{row}$.

Take $\mathcal{C}=\mathcal{C}_{col}\otimes\mathcal{C}_{row}$ as the tensor product code that instantiates the topology $T_{m\times n}(a,b,0)$, where $\mathcal{C}_{col}$ and $\mathcal{C}_{row}$ are codes specified by (\ref{columnconstrains}) and (\ref{rowconstrains}), respectively. For simplicity, for each codeword $c\in \mathcal{C}$ write
\begin{equation*}
c=(c_{11},\ldots,c_{1n},c_{21},\ldots,c_{2n},\ldots,c_{m1},\ldots,c_{mn}),
\end{equation*}
where for each $j\in[n], (c_{1j},\ldots,c_{mj})$ is a codeword in $\mathcal{C}_{col}$ and for each $i\in[m], (c_{i1},\ldots,c_{in})$ is a codeword in $\mathcal{C}_{row}$.

Denote $\mathbf{H}_{col}$ and $\mathbf{H}_{row}$ as the parity check matrices of $\mathcal{C}_{col}$ and $\mathcal{C}_{row}$ respectively, assume
\begin{equation*}\label{pmCR1}
\mathbf{H}_{col}=\left(\begin{array}{cccc}
%\mathbf{H}_L^1   &  \mathbf{H}_L^2   & \ldots  &  \mathbf{H}_L^{l}\\[1mm]
\alpha_1^{(1)}  &  \alpha_2^{(1)}  &\cdots  &  \alpha_m^{(1)} \\ %[1mm]
\alpha_1^{(2)}  &  \alpha_2^{(2)}  &  \ldots  &  \alpha_m^{(2)} \\%[1mm]
\vdots  &  \vdots  & \ddots  &  \vdots \\%[1mm]
\alpha_1^{(a)}  &  \alpha_2^{(a)}  &  \ldots  &  \alpha_m^{(a)} \\%[1mm]
\end{array}
\right)~\text{and}~
\mathbf{H}_{row}=\left(\begin{array}{cccc}
%\mathbf{H}_L^1   &  \mathbf{H}_L^2   & \ldots  &  \mathbf{H}_L^{l}\\[1mm]
\beta_1^{(1)}  &  \beta_2^{(1)}  &\cdots  &  \beta_n^{(1)} \\ %[1mm]
\beta_1^{(2)}  &  \beta_2^{(2)}  &  \ldots  &  \beta_n^{(2)} \\%[1mm]
\vdots  &  \vdots  & \ddots  &  \vdots \\%[1mm]
\beta_1^{(b)}  &  \beta_2^{(b)}  &  \ldots  &  \beta_n^{(b)} \\%[1mm]
\end{array}
\right).
\end{equation*}
Then consider the following $(an +bm )\times mn$ matrix:
\begin{equation}\label{pmT(a,b,0)}
\mathbf{H}_{(a,b,0)}=\left(\begin{array}{cccc}
%\mathbf{H}_L^1   &  \mathbf{H}_L^2   & \ldots  &  \mathbf{H}_L^{l}\\[1mm]
\mathbf{H}_1  &  \mathbf{H}_2  &\cdots  &  \mathbf{H}_m \\ %[1mm]
\mathbf{H}_{row}  &  \mathbf{0}  &  \ldots  &  \mathbf{0} \\%[1mm]
\mathbf{0}  &  \mathbf{H}_{row}  &  \ldots  &  \mathbf{0} \\%[1mm]
\vdots  &  \vdots  & \ddots  &  \vdots \\%[1mm]
\mathbf{0}  &  \mathbf{0}  &  \ldots  &  \mathbf{H}_{row} \\%[1mm]
\end{array}
\right),
\end{equation}
where
\begin{equation}\label{pmCR2}
\mathbf{H}_{i}=\left(\begin{array}{ccccc}
%\mathbf{H}_L^1   &  \mathbf{H}_L^2   & \ldots  &  \mathbf{H}_L^{l}\\[1mm]
\vec{\alpha}_i  &  \mathbf{0}   &  \mathbf{0}  &\cdots  &  \mathbf{0} \\ %[1mm]
\mathbf{0}  &  \vec{\alpha}_i  &  \mathbf{0}   &\ldots  &  \mathbf{0} \\%[1mm]
\vdots  &  \vdots  &  \vdots  &\ddots  &  \vdots \\%[1mm]
\mathbf{0}  &  \mathbf{0}  &   \mathbf{0}  &  \ldots  &  \vec{\alpha}_i \\%[1mm]
\end{array}
\right)~\text{is a matrix of size $(a\cdot n)\times n$}~\text{with}~
\vec{\alpha}_i=\left(\begin{array}{c}
%\mathbf{H}_L^1   &  \mathbf{H}_L^2   & \ldots  &  \mathbf{H}_L^{l}\\[1mm]
\alpha_i^{(1)}\\ %[1mm]
\alpha_i^{(2)}\\%[1mm]
\vdots\\%[1mm]
\alpha_i^{(a)}\\%[1mm]
\end{array}
\right).
\end{equation}
From the above construction, we can see that $\mathbf{H}_{(a,b,0)}$ includes all the parity check constraints of $\mathcal{C}$, and it can be easily verified that $\mathbf{H}_{(a,b,0)}\cdot c^{T}=0$ for each codeword $c\in \mathcal{C}$. Since the size of $\mathbf{H}_{(a,b,0)}$ is $(an +bm )\times mn$, instead of the parity check matrix of $\mathcal{C}$, it can only be regarded as an approximation of the parity check matrix. Therefore, we call $\mathbf{H}_{(a,b,0)}$ a \emph{pseudo-parity check matrix} of the code $\mathcal{C}$.

Similar to \ref{fact1}, using basic linear algebra arguments, we have the following proposition for \emph{pseudo-parity check matrix} of code $\mathcal{C}$.
\begin{proposition}\label{prop1}
Assume a subset $E\subseteq [mn]$ of the coordinates of $\mathcal{C}$ are erased, then they can be recovered if and only if the \emph{pseudo-parity check matrix} $\mathbf{H}_{(a,b,0)}$ restricted to coordinates in $E$ has full column rank.
\end{proposition}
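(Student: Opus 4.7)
The plan is to observe that although $\mathbf{H}_{(a,b,0)}$ is not a genuine parity check matrix of $\mathcal{C}$ (it has $an+bm$ rows, while a minimal parity check matrix has only $mn-(m-a)(n-b)=an+bm-ab$ rows, so up to $ab$ of its rows are redundant), its row space nevertheless coincides with $\mathcal{C}^{\perp}$. Once this is established, the proposition reduces immediately to Fact \ref{fact1}, because the rank of a matrix restricted to a column set $E$ depends only on its row space, not on the particular choice of generators.

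The first step is to check that every row of $\mathbf{H}_{(a,b,0)}$ lies in $\mathcal{C}^{\perp}$. Each row in the top block encodes a single column constraint $\sum_i \alpha_i^{(k)} c_{ij}=0$ from (\ref{columnconstrains}), and each row in the bottom block encodes a single row constraint $\sum_j \beta_j^{(k)} c_{ij}=0$ from (\ref{rowconstrains}); by construction both families are satisfied by every codeword of $\mathcal{C}=\mathcal{C}_{col}\otimes \mathcal{C}_{row}$. Hence $\mathcal{C}\subseteq \ker \mathbf{H}_{(a,b,0)}$.

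For the reverse inclusion, I would argue directly from the definition of the tensor product. A vector $c\in \mathbb{F}_q^{mn}$, regarded as an $m\times n$ array $(c_{ij})$, lies in $\ker \mathbf{H}_{(a,b,0)}$ exactly when every column of $c$ satisfies all $a$ defining checks of $\mathcal{C}_{col}$ and every row satisfies all $b$ defining checks of $\mathcal{C}_{row}$; equivalently, when every column of $c$ lies in $\mathcal{C}_{col}$ and every row lies in $\mathcal{C}_{row}$. This is precisely the condition $c\in \mathcal{C}_{col}\otimes \mathcal{C}_{row}=\mathcal{C}$. Combining the two inclusions yields $\ker \mathbf{H}_{(a,b,0)}=\mathcal{C}$, and therefore by duality the row space of $\mathbf{H}_{(a,b,0)}$ equals $\mathcal{C}^{\perp}$.

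To finish, pick any bona fide parity check matrix $\mathbf{H}$ of $\mathcal{C}$; its row space is also $\mathcal{C}^{\perp}$. Consequently $\mathbf{H}|_{E}$ and $\mathbf{H}_{(a,b,0)}|_{E}$ share a row space (namely the image of $\mathcal{C}^{\perp}$ under coordinate projection onto $E$), and therefore have the same rank. Fact \ref{fact1} says that $E$ is recoverable if and only if $\mathbf{H}|_{E}$ has full column rank $|E|$, which by the previous sentence is equivalent to $\mathbf{H}_{(a,b,0)}|_{E}$ having full column rank. The only conceptual point is the kernel identification, which is essentially a restatement of the tensor product definition; the reason Fact \ref{fact1} cannot be applied to $\mathbf{H}_{(a,b,0)}$ directly is the presence of the $ab$ redundant rows, and the row-space equality above is exactly the one-line bridge needed.
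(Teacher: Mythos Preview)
Your argument is correct, and it is precisely the ``basic linear algebra argument'' the paper alludes to without spelling out: the paper gives no proof beyond the sentence preceding the proposition, so your identification $\ker \mathbf{H}_{(a,b,0)}=\mathcal{C}$ (hence $\mathrm{Row}(\mathbf{H}_{(a,b,0)})=\mathcal{C}^{\perp}$) followed by the reduction to Fact~\ref{fact1} is exactly the intended route. There is nothing to add or change.
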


When $a=1$, if $\mathcal{C}$ is MR, from the MDS property of the code $\mathcal{C}_{col}$, we know that $\mathbf{H}_{col}$ has rank 1. Especially, when considering the existence of MR codes for topologies $T_{m\times n}(1,b,0)$, w.l.o.g, we can fix $\mathcal{C}_{col}$ to be the simple parity code $\mathcal{P}_m$, i.e., $\mathbf{H}_{col}=(1~1~\cdots~1)$. Hence, the \emph{pseudo-parity check matrix} $\mathbf{H}_{(1,b,0)}$ of $\mathcal{C}=\mathcal{P}_m\otimes \mathcal{C}_{row}$ has the form:
\begin{equation}\label{pmT(1,b,0)}
\mathbf{H}_{(1,b,0)}=\left(\begin{array}{cccc}
%\mathbf{H}_L^1   &  \mathbf{H}_L^2   & \ldots  &  \mathbf{H}_L^{l}\\[1mm]
\mathbf{I}_n  &  \mathbf{I}_n  &\cdots  &  \mathbf{I}_n \\[1mm]
\hdashline
\mathbf{H}_{row}  &  \mathbf{0}  &  \ldots  &  \mathbf{0} \\[1mm]
\mathbf{0}  &  \mathbf{H}_{row}  &  \ldots  &  \mathbf{0} \\[1mm]
\vdots  &  \vdots  & \ddots  &  \vdots \\[1mm]
\mathbf{0}  &  \mathbf{0}  &  \ldots  &  \mathbf{H}_{row} \\[1mm]
\end{array}
\right)=
\left(\begin{array}{c}
%\mathbf{H}_L^1   &  \mathbf{H}_L^2   & \ldots  &  \mathbf{H}_L^{l}\\[1mm]
\mathbf{H}_1\\[1mm]
\mathbf{H}_2\\[1mm]
\end{array}
\right).
\end{equation}

\begin{remark}\label{pcMRLRC}
Let $r|n$ and $g=\frac{n}{r}$, an $(n,r,h,a,q)$-MR LRC (for specific definition, see \cite{Gopi2018}) can be viewed as an MR code for topology $T_{g\times r}(a,0,h)$. Therefore, it has simpler erasure patterns compared to the tensor product cases. And instead of using the \emph{pseudo-parity check matrix}, it can be verified that the parity check matrix of any $(n,r,h,a,q)$-MR LRC admits the form
\begin{equation*}
\mathbf{H}=\left(\begin{array}{cccc}
%\mathbf{H}_L^1   &  \mathbf{H}_L^2   & \ldots  &  \mathbf{H}_L^{l}\\[1mm]
\mathbf{A}_1 &  \mathbf{A}_2  &\cdots  &  \mathbf{A}_{g} \\[1mm]
\mathbf{H}_{1}  &  \mathbf{0}  &  \ldots  &  \mathbf{0} \\[1mm]
\mathbf{0}  &  \mathbf{H}_{2}  &  \ldots  &  \mathbf{0} \\[1mm]
\vdots  &  \vdots  & \ddots  &  \vdots \\[1mm]
\mathbf{0}  &  \mathbf{0}  &  \ldots  &  \mathbf{H}_{g} \\[1mm]
\end{array}
\right),
\end{equation*}
where for each $i\in[g]$, $\mathbf{H}_i$ is a parity check matrix of an $[r,r-a,a+1]$ MDS code and $\mathbf{A}_i$ is an $h\times r$ matrix over $\mathbb{F}_q$ corresponding to the global parities.

Compared to MR LRCs, MR codes for topologies $T_{m\times n}(a,b,0)$ have another difference. For an $(n,r,h,a,q)$-MR LRC, the $[r,r-a,a+1]$ MDS codes within each local group can be different, this results in that the corresponding parity check matrix $\mathbf{H}$ above can admit different $\mathbf{H}_i$s. However, since an MR code for topology $T_{m\times n}(a,b,0)$ is actually a tensor product code $\mathcal{C}=\mathcal{C}_{col}\otimes\mathcal{C}_{row}$. Thus, for each $i\in[m]$, if we take coordinates in $\{n(i-1)+1,\ldots,ni\}$ as a local group, once the code $\mathcal{C}_{row}$ is fixed, the corresponding $[n,n-b,b+1]$ MDS codes within each local group are all $\mathcal{C}_{row}$ and the corresponding parity check matrices in $\mathbf{H}_{(a,b,0)}$ are all $\mathbf{H}_{row}$.
\end{remark}

\subsection{Regular irreducible erasure patterns}

Let $E\in[m]\times[n]$ be an erasure pattern of the topology $T_{m\times n}(a,b,0)$, then it can be presented in the following form:
\begin{equation*}\label{erasurepattern}
E=\left(\begin{array}{ccccccc}
%\mathbf{H}_L^1   &  \mathbf{H}_L^2   & \ldots  &  \mathbf{H}_L^{l}\\[1mm]
*  &  *  &  *  &  *  &  \cdots  &  \circ  &  \circ\\[1mm]
*  &  *  &  \circ  &  *  &  \cdots  &  \circ  &  *\\[1mm]
\circ  &  *  &  \circ  &  *  &  \cdots  &  *  &  \circ\\[1mm]
\vdots  &  \vdots  &  \vdots  &  \vdots  &  \ddots  &  \vdots  &  \vdots\\[1mm]
\circ  &  \circ  &  \circ  &  *  &  \cdots  &  *  &  *\\[1mm]
\end{array}
\right),
\end{equation*}
where $*$ stands for the erasure and $\circ$ stands for the non-erasure. Give two different erasure patterns $E_1$ and $E_2$, we say that $E_1$ and $E_2$ are of the same type, if $E_2$ can be obtained from $E_1$ by applying elementary row and column transformations. %Clearly, different erasure patterns of the same type are equivalent with respect to the recoverability.% For example,

For a reducible erasure pattern $E$, there exists some $i_0\in[m]$ or $j_0\in[n]$, such that the number of the erasures in $E\cap [i_0]\times [n]$ or $E\cap [m]\times [j_0]$ is less than $b+1$ or $a+1$. Therefore, from the MDS properties of the code $\mathcal{C}_{row}$ and $\mathcal{C}_{col}$, erasures in $E\cap [i_0]\times [n]$ or $E\cap [m]\times [j_0]$ can be simply repaired by using only the parities within $\mathbf{H}_{row}$ or $\mathbf{H}_{col}$. Hence, the very erasure patterns that affect the MR property of the code $\mathcal{C}$ are irreducible erasure patterns. In other words, if we can construct a code $\mathcal{C}$ instantiating the topology $T_{m\times n}(a,b,0)$ that can correct all correctable irreducible erasure patterns, then this code $\mathcal{C}$ is an MR instantiation for the topology $T_{m\times n}(a,b,0)$.

Now, we focus on the irreducible erasure patterns that are correctable. Given an irreducible erasure pattern $E$, denote $|E|$ as the number of $*$s in $E$, $U_E=\{i\in[m]: \exists j\in[n]~\text{such~that} ~E(i,j)=*\}$ and $V_E=\{j\in[n]: \exists i\in[m]~\text{such that}~ E(i,j)=*\}$. From the irreducibility of $E$, we have
\begin{equation*}
|E|\geq (a+1)|V_E|~and~|E|\geq (b+1)|U_E|.
\end{equation*}
Meanwhile, from Theorem~\ref{Gop17}, we know that for topology $T_{m\times n}(1,b,0)$, an erasure pattern $E$ is correctable if and only if $E$ is regular. Thus we have
\begin{equation*}
|E|=|E\cap (U_E\times V_E)|\leq a|V_E|+b|U_E|-ab=|V_E|+b|U_E|-b.
\end{equation*}
Combining the above three inequalities together, we have
\begin{equation}\label{ineq0}
|U_E|+b\leq|V_E|\leq b|U_E|-b,
\end{equation}
for every correctable irreducible erasure patterns $E$ in $T_{m\times n}(1,b,0)$. Therefore,
\begin{equation}\label{ineq1}
\max\{{2(|U_E|+b),(b+1)|U_E|}\}\leq |E|\leq 2b(|U_E|-1),
\end{equation}
which indicates that once $|U_E|$ (or $|V_E|$) is given, the magnitude of $|E|$ can not be too large.

Denote $\mathcal{E}$ as the set of all the types of regular irreducible erasure patterns for topology $T_{m\times n}(1,b,0)$, i.e., for each $E\in\mathcal{E}$, one can regard $E$ as a representative of all the erasure patterns that have the same type as $E$. Since $U_E\subseteq[m]$ and $V_E\subseteq [n]$, for each regular irreducible erasure pattern $E$, we have $|V_E|\leq b(m-1)$ and $|E|\leq 2b(m-1)$. For convenience, we can take each type of erasure patterns in $\mathcal{E}$ as a submatrix of an $m\times b(m-1)$ matrix with elements from $\{*,\circ\}$. Therefore, we can obtain the following upper bound of $|\mathcal{E}|$:
\begin{equation}\label{typenumber}
|\mathcal{E}|\leq {m\cdot b(m-1)\choose \leq 2b(m-1)}.
\end{equation}

\section{A polynomial upper bound on the maximum field size required for MR codes}\label{secup}

In this section, we take the prime $p=2$, which is the natural setting for distributed storage. And we will establish our polynomial upper bound on the minimal field size required for MR codes that instantiate the topology $T_{m\times n}(1,b,0)$.

\begin{theorem}\label{upbT(1,b,0)}
Let $m,b\geq 1$. Then for any $q\geq C_0(m,b)\cdot n^{2b(m-1)}+n^{(b-1)}$, there exists an MR code $\mathcal{C}$ that instantiates the topology $T_{m\times n}(1,b,0)$ over the field $\mathbb{F}_q$, where $C_0(m,b)=(m+1)!\cdot{m\cdot b(m-1)\choose \leq2b(m-1)}$.
\end{theorem}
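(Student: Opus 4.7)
The plan is to fix $\mathcal{C}_{col}$ as the single-parity code $\mathcal{P}_m$ (so $\mathbf{H}_{col}=(1,\ldots,1)$) and to choose the entries $\beta_j^{(k)}$ of the row parity-check matrix $\mathbf{H}_{row}$ via an application of the Combinatorial Nullstellensatz. The pseudo-parity check matrix then assumes the form (\ref{pmT(1,b,0)}), and by Proposition \ref{prop1}, Theorem \ref{Gop17}, and the reduction to regular irreducible erasure patterns in Section III.B, it suffices to find an assignment in $\mathbb{F}_q$ under which (i) $\mathcal{C}_{row}$ is an $[n,n-b,b+1]$ MDS code, and (ii) for every regular irreducible erasure pattern $E$ the restricted matrix $\mathbf{H}_{(1,b,0)}|_E$ has full column rank $|E|$.

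For each regular irreducible $E$, Theorem \ref{upbGMR} (combined with Theorem \ref{Gop17}) guarantees the existence, over some large field, of an assignment of the $\beta_j^{(k)}$'s making $\mathbf{H}_{(1,b,0)}|_E$ attain rank $|E|$. Consequently, at least one $|E|\times|E|$ minor $f_E$ of the restricted matrix, viewed as a polynomial in the formal variables $\{\beta_j^{(k)}: j \in V_E,\;k \in [b]\}$, is not identically zero. Likewise, the MDS condition on $\mathcal{C}_{row}$ is encoded by the nonvanishing of
\begin{equation*}
g \;:=\; \prod_{S \in \binom{[n]}{b}} \det\bigl(\mathbf{H}_{row}|_S\bigr).
\end{equation*}
I then form
\begin{equation*}
F \;=\; g \cdot \prod_{E} f_E ,
\end{equation*}
the inner product ranging over all regular irreducible erasure patterns. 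Every factor being a nonzero polynomial, $F$ itself is not identically zero.

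To apply the Combinatorial Nullstellensatz with $S=\mathbb{F}_q$ in each variable, the final task is to bound the degree of $F$. From (\ref{ineq0})--(\ref{ineq1}) and (\ref{typenumber}), $\mathcal{E}$ has at most $\binom{m\cdot b(m-1)}{\leq 2b(m-1)}$ elements and each $|E|\leq 2b(m-1)$; crudely enumerating patterns of a given type as size-$|E|$ subsets of $[m]\times[n]$ together with an ordering of the at most $m$ rows involved bounds the total number of regular irreducible erasure patterns by $(m+1)!\cdot|\mathcal{E}|\cdot n^{2b(m-1)}=C_0(m,b)\cdot n^{2b(m-1)}$. Since each $f_E$, being the determinant of an $|E|\times |E|$ matrix whose entries are affine in the $\beta$'s, has degree $\leq |E|\leq 2b(m-1)$, and $g$ contributes degree $O(n^{b-1})$ in each variable, combining these estimates yields a degree bound of the form $C_0(m,b)\cdot n^{2b(m-1)}+n^{b-1}$. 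For $q$ above this threshold, the Combinatorial Nullstellensatz produces an assignment in $\mathbb{F}_q$ with $F\neq 0$, giving an MR instantiation of $T_{m\times n}(1,b,0)$. The principal difficulty is that one must pick, for each $E$, an \emph{explicit} $|E|\times |E|$ minor whose determinant is provably a nonzero polynomial, and then locate within $F$ a monomial of the stated top degree with nonzero coefficient, as demanded by the Combinatorial Nullstellensatz.
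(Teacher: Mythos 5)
Your proposal follows essentially the same strategy as the paper's proof: fix $\mathcal{C}_{col}=\mathcal{P}_m$, treat the entries of $\mathbf{H}_{row}$ as formal variables, pass to the pseudo-parity check matrix, encode the MDS condition and the full-rank conditions for each regular irreducible erasure pattern as determinantal polynomials, and apply the Combinatorial Nullstellensatz to their product. Two small points where your accounting should be tightened to actually recover the stated bound.

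First, the degree bookkeeping must be done \emph{per variable}, not in total degree. Your observation that each $f_E$ has total degree $\leq|E|\leq 2b(m-1)$ does not by itself bound $\deg_{x_{ij}}(F)$. The paper gets the per-variable bound by first reducing $\mathbf{H}_{(1,b,0)}|_E$, via row and column operations on the $\mathbf{I}_n$ blocks, to a block form $\left(\begin{smallmatrix}\mathbf{I}_{v_0}&\mathbf{0}\\ \mathbf{A}&\mathbf{B}\end{smallmatrix}\right)$ so that only a $(|E|-v_0)\times(|E|-v_0)$ minor of $\mathbf{B}$ matters; in $\mathbf{B}$ each $x_{ij}$ appears in at most $u_0-1\leq m-1$ columns, giving $\deg_{x_{ij}}(f_E)\leq m-1$. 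Multiplying by the bound $m!\cdot\prod_i\binom{n}{|E_i|}\leq m!\cdot n^{2b(m-1)}$ on the number of patterns of each type, summing over $|\mathcal{E}|$ types, and using $(m-1)\cdot m!\leq(m+1)!$ gives $\deg_{x_{ij}}\bigl(\prod_E f_E\bigr)\leq C_0(m,b)\cdot n^{2b(m-1)}$, and adding $\deg_{x_{ij}}(g)\leq\binom{n-1}{b-1}\leq n^{b-1}$ gives the stated threshold. Your estimate of $(m+1)!\cdot|\mathcal{E}|\cdot n^{2b(m-1)}$ for the \emph{number} of patterns already uses up the $(m+1)!$ factor and leaves no room for the $(m-1)$ degree-per-pattern contribution; the cleaner route is the paper's.

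Second, the worry in your final paragraph about identifying a specific top-degree monomial of $F$ with nonzero coefficient is not actually an obstacle. For $S_i=\mathbb{F}_q$ with $q>\deg_{x_{ij}}(F)$ for every variable, the easy form of the Combinatorial Nullstellensatz (a nonzero multivariate polynomial with $\deg_{x_i}<|S_i|$ for all $i$ cannot vanish on $\prod S_i$) already gives a non-vanishing point, and this is what the paper is implicitly invoking. One also does not need to pinpoint \emph{which} minor $\mathbf{B}'(E)$ yields a nonzero $f_E$; it suffices, as you correctly note via Theorems~\ref{upbGMR} and~\ref{Gop17}, that \emph{some} such minor is a nonzero polynomial, and then one fixes any such choice for each $E$ when forming $F$.
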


In order to do this, we will exhibit a column code $\mathcal{C}_{col}$ and a row code $\mathcal{C}_{row}$ over a relative small field, so that for every correctable irreducible erasure pattern $E$, the code $\mathcal{C}_{col}\otimes\mathcal{C}_{row}$ can correct $E$. Thus the tensor product code $\mathcal{C}=\mathcal{C}_{col}\otimes\mathcal{C}_{row}$ is an MR code that instantiates the topology $T_{m\times n}(1,b,0)$. We also need the following lemma known as the Combinatorial Nullstellensatz.

\begin{lemma}\label{CN}(Combinatorial Nullstellensatz) \cite{Alon1999}
Let $\mathbb{F}$ be an arbitrary field, let $P\in\mathbb{F}[t_1,\ldots,t_n]$ be a polynomial of degree $d$ which contains a non-zero coefficient at $t_1^{d_1}\cdots t_n^{d_n}$ with $d_1+\cdots+d_n=d$, and let $S_1,\ldots,S_n$ be subsets of $\mathbb{F}$ such that $|S_i|>d_i$ for all $1\leq i\leq n$. Then there exist $x_1\in S_1,\ldots,x_n\in S_n$ such that $P(x_1,\ldots,x_n)\neq 0$.
\end{lemma}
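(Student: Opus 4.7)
The strategy is to fix $\mathcal{C}_{col}$ as the simple parity code $\mathcal{P}_m$, so $\mathbf{H}_{col}=(1~1~\cdots~1)$, and to construct $\mathcal{C}_{row}$ via a Vandermonde-style parity check matrix $\mathbf{H}_{row}$ whose $j$-th column is $(1,t_j,t_j^{2},\ldots,t_j^{b-1})^{T}$, treating $t_1,\ldots,t_n$ as formal variables over $\mathbb{F}_2$. By Proposition~\ref{prop1}, combined with the reduction to irreducible patterns and the correctability criterion in Theorem~\ref{Gop17}, the tensor product code $\mathcal{P}_m\otimes\mathcal{C}_{row}$ is MR iff for every regular irreducible erasure pattern $E\subseteq[m]\times[n]$ the pseudo-parity check matrix $\mathbf{H}_{(1,b,0)}|_E$ from (\ref{pmT(1,b,0)}) has full column rank.

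For each such $E$, pick an $|E|\times|E|$ submatrix of $\mathbf{H}_{(1,b,0)}|_E$ whose determinant $f_E(t_1,\ldots,t_n)\in\mathbb{F}_2[t_1,\ldots,t_n]$ is not identically zero; such a submatrix exists because regular patterns are correctable by Theorem~\ref{Gop17}, so some specialization of the $t_j$'s makes $\mathbf{H}_{(1,b,0)}|_E$ full rank. Let $F=\prod_E f_E$ range over all regular irreducible patterns $E$ in $[m]\times[n]$. Since $\mathbb{F}_2[t_1,\ldots,t_n]$ is an integral domain, $F\not\equiv 0$; any assignment $(\alpha_1,\ldots,\alpha_n)\in\mathbb{F}_q^{n}$ with $F(\alpha_1,\ldots,\alpha_n)\neq 0$ simultaneously makes every $f_E$ non-vanishing, giving the desired MR instantiation over $\mathbb{F}_q$.

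The key quantitative step is to bound $\deg(F)$ so that Combinatorial Nullstellensatz (Lemma~\ref{CN}) produces such an assignment. In $f_E$ the top identity block of $\mathbf{H}_{(1,b,0)}$ contributes constants, while each Vandermonde entry has $t_j$-degree at most $b-1$; expanding the determinant therefore gives $\deg(f_E)\leq (b-1)|E|\leq 2b(b-1)(m-1)$, using the bound $|E|\leq 2b(m-1)$ from (\ref{ineq1}). The number of regular irreducible patterns in $[m]\times[n]$ is at most $|\mathcal{E}|$, from (\ref{typenumber}), times the number of row/column embeddings, which is crudely at most $m!\cdot n^{2b(m-1)}$ upon bounding $|V_E|$ by $|E|\leq 2b(m-1)$. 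Combining these bounds and absorbing the per-pattern factor into the constant yields $\deg(F)\leq C_0(m,b)\cdot n^{2b(m-1)}$.

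To close, apply Lemma~\ref{CN}: take any monomial $t_1^{d_1}\cdots t_n^{d_n}$ of $F$ of total degree $\deg(F)$ with non-zero $\mathbb{F}_2$-coefficient and set each $S_j=\mathbb{F}_q$; since one $d_j$ can be as large as $\deg(F)$ in the worst case, the hypothesis $|S_j|>d_j$ is guaranteed as soon as $q>\deg(F)$, producing the main term $C_0(m,b)\cdot n^{2b(m-1)}$. The additive $n^{b-1}$ absorbs the side condition that the $\alpha_j$'s be distinct, so that $\mathbf{H}_{row}$ is a genuine Vandermonde and hence MDS as required by Proposition~\ref{MDSforMR}. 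I expect the main obstacle to be the combined counting/degree bookkeeping that produces the explicit constant $C_0(m,b)$ and certifies a concrete maximum-degree monomial of the product $F$ with non-zero coefficient in characteristic two; here the integral-domain argument for $F\not\equiv 0$ is essential, as is the crude relaxation $|V_E|\leq|E|\leq 2b(m-1)$, which governs the final $n^{2b(m-1)}$ scaling.
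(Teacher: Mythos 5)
You have proved the wrong statement. The statement you were asked to prove is the Combinatorial Nullstellensatz itself (Lemma~\ref{CN}): given a polynomial $P\in\mathbb{F}[t_1,\ldots,t_n]$ of total degree $d$ with a nonzero coefficient on a degree-$d$ monomial $t_1^{d_1}\cdots t_n^{d_n}$, and grid sets $S_i$ with $|S_i|>d_i$, there is a grid point where $P$ does not vanish. That is a self-contained algebraic result whose standard proof proceeds by induction on the number of variables via division with remainder against the vanishing polynomials $g_i(t_i)=\prod_{s\in S_i}(t_i-s)$, ultimately reducing to the fact that a nonzero univariate polynomial of degree $d$ has at most $d$ roots. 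Your proposal contains none of this.

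Instead, what you wrote is a proof sketch of Theorem~\ref{upbT(1,b,0)} (the polynomial field-size bound for MR codes instantiating $T_{m\times n}(1,b,0)$): you fix $\mathcal{C}_{col}=\mathcal{P}_m$, set up $\mathbf{H}_{row}$ with variable entries, build the product polynomial $F=\prod_E f_E$, bound its degree, and then at the end \emph{invoke} Lemma~\ref{CN} to conclude. Using the lemma you are supposed to be proving makes the argument circular as a proof of Lemma~\ref{CN}, and the subject matter (pseudo-parity check matrices, regular irreducible erasure patterns) is entirely foreign to the statement of the Combinatorial Nullstellensatz, which is a general statement about polynomial nonvanishing on Cartesian products of finite sets and has nothing to do with coding theory per se. If you intended to prove Theorem~\ref{upbT(1,b,0)}, you should say so and be graded against that theorem; as a proof of Lemma~\ref{CN}, the proposal does not engage with the claim at all.
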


\begin{proof}[Proof of Theorem \ref{upbT(1,b,0)}]
Since the case $m=1$ is trivial, w.l.o.g., we assume $m\geq 2$. For simplicity, we fix $\mathcal{C}_{col}$ as the simple parity code $\mathcal{P}_m$ and focus on obtaining the code $\mathcal{C}_{row}$.

Denote $\mathcal{E}$ as the set of all the types of regular irreducible erasure patterns for topology $T_{m\times n}(1,b,0)$. Assume the parity check matrix of the code $\mathcal{C}_{row}$ is $\mathbf{H}_{row}$, then the \emph{pseudo-parity check matrix} $\mathbf{H}$ is of the form in (\ref{pmT(1,b,0)}). Thus, our goal is to construct a $b\times n$ matrix $\mathbf{H}_{row}$ such that:
\begin{enumerate}
  \item[(i)] Every $b$ distinct columns of $\mathbf{H}_{row}$ are linearly independent.
  \item[(ii)] For each regular irreducible erasure pattern $E$, the \emph{pseudo-parity check matrix} $\mathbf{H}\in\mathbb{F}_q^{(n+bm)\times mn}$ of $\mathcal{C}$
  satisfies: $rank(\mathbf{H}|_E)=|E|$.
\end{enumerate}

Given a regular irreducible erasure pattern $E\in[m]\times [n]$, w.l.o.g., assume $U_E=[u_0]\subseteq[m]$ and $V_E=[v_0]\subseteq[n]$, then $E$ has the form
\begin{equation*}\label{erasurepattern1}
E=\left(\begin{array}{ccccccc}
%\mathbf{H}_L^1   &  \mathbf{H}_L^2   & \ldots  &  \mathbf{H}_L^{l}\\[1mm]
*  &  *  &  *  &  *  &  \cdots  &  \circ  &  \circ\\[1mm]
*  &  *  &  *  &  \circ  &  \cdots  &  *  &  \circ\\[1mm]
\circ  &  *  &  \circ  &  *  &  \cdots  &  *  &  \circ\\[1mm]
\vdots  &  \vdots  &  \vdots  &  \vdots  &  \ddots  &  \vdots  &  \vdots\\[1mm]
\circ  &  \circ  &  \circ  &  *  &  \cdots  &  *  &  *\\[1mm]
\end{array}
\right)
=\left(\begin{array}{c}
%\mathbf{H}_L^1   &  \mathbf{H}_L^2   & \ldots  &  \mathbf{H}_L^{l}\\[1mm]
E_{1}\\ %[1mm]
E_{2}\\%[1mm]
E_{3}\\%[1mm]
\vdots\\%[1mm]
E_{u_0}\\%[1mm]
\end{array}
\right),
\end{equation*}
where $E_i$ represents the sub-erasure pattern of $E$ over the $i_{th}$ row. Thus
\begin{equation*}
\mathbf{H}|_E=\left(\begin{array}{cccc}
%\mathbf{H}_L^1   &  \mathbf{H}_L^2   & \ldots  &  \mathbf{H}_L^{l}\\[1mm]
\mathbf{I}_n|_{E_{1}}  &  \mathbf{I}_n|_{E_{2}}  &\cdots  &  \mathbf{I}_n|_{E_{u_0}} \\[1mm]
\mathbf{H}_{row}|_{E_{1}}  &  \mathbf{0}  &  \ldots  &  \mathbf{0} \\[1mm]
\mathbf{0}  &  \mathbf{H}_{row}|_{E_{2}}  &  \ldots  &  \mathbf{0} \\[1mm]
\vdots  &  \vdots  & \ddots  &  \vdots \\[1mm]
\mathbf{0}  &  \mathbf{0}  &  \ldots  &  \mathbf{H}_{row}|_{E_{u_0}} \\[1mm]
\end{array}
\right)=
\left(\begin{array}{c}
%\mathbf{H}_L^1   &  \mathbf{H}_L^2   & \ldots  &  \mathbf{H}_L^{l}\\[1mm]
\mathbf{H}_1|_E\\[1mm]
\mathbf{H}_2|_E\\[1mm]
\end{array}
\right).
\end{equation*}
Let $supp(E_i)=\{j\in[n]:E_i(j)=*\}$. Since $\bigcup_{i=1}^{u_0} supp(E_i)=[v_0]$, by applying elementary row and column transformations, we have
\begin{equation}\label{simpleppcm}
\mathbf{H}|_E=\left(\begin{array}{cc}
%\mathbf{H}_L^1   &  \mathbf{H}_L^2   & \ldots  &  \mathbf{H}_L^{l}\\[1mm]
\mathbf{I}_{v_0}  &  \mathbf{0}_{v_0\times(|E|-v_0)}\\[1mm]
\mathbf{A}_{u_0b\times v_0}  &  \mathbf{B}_{u_0b\times(|E|-v_0)}\\[1mm]
\mathbf{0}_{(n-v_0)\times(v_0)}  &  \mathbf{0}_{(n-v_0)\times(|E|-v_0)}\\[1mm]
\end{array}
\right),
\end{equation}
where $\mathbf{A}$ consists of all the columns in $\mathbf{H}_2|_E$ corresponding to an $\mathbf{I}_{v_0}$ in $\mathbf{H}_1|_E$ and $\mathbf{B}$ consists of all the rest columns in $\mathbf{H}_2|_E$ by substituting columns of $\mathbf{A}$ with the same parts in $\mathbf{H}_1|_E$. Thus a non-zero element of $\mathbf{A}$ equals to some $h_{ij}$ in $\mathbf{H}_{row}$ and a non-zero element of $\mathbf{B}$ equals to $h_{ij}$ or $-h_{ij}$ for some $h_{ij}$ in $\mathbf{H}_{row}$. For example, take $U_E=V_E=\{1,2,3\}$, $\mathbf{H}_{row}=(\tb{h}_1,\ldots,\tb{h}_n)$ and
\begin{equation*}\label{eg}
E=\left(\begin{array}{ccc}
%\mathbf{H}_L^1   &  \mathbf{H}_L^2   & \ldots  &  \mathbf{H}_L^{l}\\[1mm]
*  &  *  &  \circ\\[1mm]
*  &  \circ  &  *\\[1mm]
\circ  &  \circ  &  *\\[1mm]
\end{array}
\right),
\end{equation*}
then
\begin{equation*}
\mathbf{H}|_E
=\left(\begin{array}{ccccc}
1 & & & & \\
& 1 & & & \\
& & 1 & & \\
\tb{h}_1 & \tb{h}_2 & & -\tb{h}_1 & \\
& & \tb{h}_3 & \tb{h}_1 & -\tb{h}_3 \\
& & & & \tb{h}_3\\
\multicolumn{3}{c}{\raisebox{1ex}[0pt]{$\mathbf{0}_{(n-3)\times 3}$}} & \multicolumn{2}{c}{\raisebox{1ex}[0pt]{$\mathbf{0}_{(n-3)\times 2}$}}\\
\end{array}
\right).
\end{equation*}

From the above simplified form of $\mathbf{H}|_E$ in (\ref{simpleppcm}), we have
\begin{align*}
rank(\mathbf{H}|_E)&=rank(\mathbf{I}_{v_0})+rank(\mathbf{B}_{u_0b\times(|E|-v_0)}) \\
&=v_0+rank(\mathbf{B}_{u_0b\times(|E|-v_0)}).
\end{align*}

By Definition \ref{REP} and (\ref{ineq1}), we have $|E|\leq v_0+u_0b-b$. Thus $rank(\mathbf{H}|_E)=|E|$ if and only if there exists an $(|E|-v_0)\times(|E|-v_0)$ minor $\mathbf{B}'$ in $\mathbf{B}$ such that $det(\mathbf{B}')\neq 0$.

Now, take
\begin{equation*}
\mathbf{H}_{row}
=\left(\begin{array}{ccccc}
x_{11}  &  x_{12}  &  x_{13}  &  \cdots  &  x_{1n}\\[1mm]
x_{21}  &  x_{22}  &  x_{23}  &  \cdots  &  x_{2n}\\[1mm]
x_{31}  &  x_{32}  &  x_{33}  &  \cdots  &  x_{3n}\\[1mm]
\vdots  &  \vdots  &  \vdots  &  \ddots  &  \vdots\\[1mm]
x_{b1}  &  x_{b2}  &  x_{b3}  &  \cdots  &  x_{bn}\\[1mm]
\end{array}
\right),
\end{equation*}
where each $x_{ij}$ is a variable over $\mathbb{F}_q$. Therefore, our goal is to find a proper valuation of these $x_{ij}'s$ over $\mathbb{F}_q$ such that the resulting matrix $\mathbf{H}_{row}$ satisfies both requirement (i) and requirement (ii).

\begin{itemize}
  \item For requirement (i)
\end{itemize}

For any $J=\{j_1,\ldots,j_b\}\subseteq [n]$, let $\mathbf{M}_{J}$ be the $b\times b$ submatrix of $\mathbf{H}_{row}$ formed by the $b$ columns indicated by $J$, i.e.,
\begin{equation*}
\mathbf{M}_{J}
=\left(\begin{array}{ccccc}
x_{1j_1}  &  x_{1j_2}  &  x_{1j_3}  &  \cdots  &  x_{1j_b}\\[1mm]
x_{2j_1}  &  x_{2j_2}  &  x_{2j_3}  &  \cdots  &  x_{2j_b}\\[1mm]
x_{3j_1}  &  x_{3j_2}  &  x_{3j_3}  &  \cdots  &  x_{3j_b}\\[1mm]
\vdots  &  \vdots  &  \vdots  &  \ddots  &  \vdots\\[1mm]
x_{bj_1}  &  x_{bj_2}  &  x_{bj_3}  &  \cdots  &  x_{bj_b}\\[1mm]
\end{array}
\right).
\end{equation*}
Define
\begin{equation*}
P=\prod_{J\in{[n]\choose b}}det(\mathbf{M}_{J}).
\end{equation*}
Since each $det(\mathbf{M}_{J})$ is a homogeneous polynomial of degree $b$, we know that $P$ is a homogeneous polynomial of degree $b{n\choose b}$, and each variable $x_{ij}$ has degree at most ${{n-1}\choose {b-1}}$. According to the definition of $P$, if there is a valuation $(h_{11},\ldots,h_{bn})$ of $(x_{11},\ldots,x_{bn})$ such that $P(h_{11},\ldots,h_{bn})\neq 0$, then the resulting matrix $\mathbf{H}_{row}=(h_{ij})_{i\in[b],j\in[n]}$ satisfies requirement (i).

\begin{itemize}
  \item For requirement (ii)
\end{itemize}

For each regular irreducible erasure pattern $E\in[m]\times [n]$, set $|U_E|=u_0$ and $|V_E|=v_0$ and consider the $u_0b\times(|E|-v_0)$ submatrix $\mathbf{B}(E)$ of $\mathbf{H}|_E$ in (\ref{simpleppcm}). For each $(|E|-v_0)\times(|E|-v_0)$ minor $\mathbf{B}'(E)$ in $\mathbf{B}(E)$, $det(\mathbf{B}'(E))$ can be viewed as a multi-variable polynomial in $\mathbb{F}_q[x_{11},\ldots,x_{bn}]$ with degree at most $|E|-v_0$. Since each non-zero element of $\mathbf{B}$ equals to $x_{ij}$ or $-x_{ij}$ for some $x_{ij}$ in $\mathbf{H}_{row}$, and each variable $x_{ij}$ appears in at most $u_0-1$ columns of $\mathbf{B}$, thus for each minor $\mathbf{B}'(E)$ we have
\begin{equation}\label{det}
det(\mathbf{B}'(E))=\sum_{\substack{\sum_{1\leq i\leq b,1\leq j\leq n} a_{ij}=|E|-v_0,\\ 0\leq a_{ij}\leq u_0-1}} c_{(a_{11},\ldots,a_{bn})}\cdot x_{11}^{a_{11}}x_{12}^{a_{12}}\cdots x_{bn}^{a_{bn}},
\end{equation}
where $c_{(a_{11},\ldots,a_{bn})}$ equals to $0,1$ or $-1$.

Noticed that the structure of $\mathbf{B}'(E)$ is determined by the erasure pattern. Therefore, once $E$ is given, for each minor $\mathbf{B}'(E)$ in $\mathbf{B}(E)$, $det(\mathbf{B}'(E))$ can be viewed as a polynomial in $\mathbb{F}_2[x_{11},\ldots,x_{bn}]$ with a fixed form.

Since for each regular irreducible erasure pattern $E$, $|E|\leq v_0+bm-b$. Thus, when $q >bm-b\geq \deg(det(\mathbf{B}'(E)))$, $det(\mathbf{B}'(E))|_{\mathbb{F}_q^{(bn)}}\equiv 0$ if and only if $det(\mathbf{B}'(E))=\mathbf{0}$ (i.e. the zero polynomial).

According to the proof of Theorem \ref{Gop17} in \cite{Gopalan2017}, when the size of the field is large enough, there exists a code $\mathcal{C}_{0}$ such that the tensor product code $\mathcal{C}=\mathcal{P}_m\otimes \mathcal{C}_{0}$ can correct $E$. This means that there exists a valuation of the $bn$ variables in $\mathbf{H}_{row}$ such that $det(\mathbf{B}'(E))\neq 0$ for some $(|E|-|V_E|)\times(|E|-|V_E|)$ minor $\mathbf{B}'(E)$ in $\mathbf{B}(E)$. By this, we know that the multi-variable polynomial $det(\mathbf{B}'(E))$ corresponding to this minor $\mathbf{B}'(E)$ can not be zero polynomial. From the previous analysis, we know that the form of this polynomial $det(\mathbf{B}'(E))$ is irrelevant to the size of the field. Therefore, for any $q> bm-b$ as a power of $2$, this $det(\mathbf{B}'(E))$ is a non-zero polynomial in $\mathbb{F}_q[x_{11},\ldots,x_{bn}]$.

For each regular irreducible erasure pattern $E$, denote $f_E$ as the non-zero determinant polynomial corresponding to some $(|E|-|V_E|)\times(|E|-|V_E|)$ minor $\mathbf{B}'(E)$ in $\mathbf{B}(E)$. Define
\begin{equation}\label{poly1}
F=\prod_{\substack{E\in[m]\times [n], \\E\text{~is a regular irreducible erasure pattern}}}f_E.
\end{equation}
Similarly, if there is a valuation $(h_{11},\ldots,h_{bn})$ of $(x_{11},\ldots,x_{bn})$ such that $F(h_{11},\ldots,h_{bn})\neq 0$, then the resulting matrix $\mathbf{H}_{row}=(h_{ij})_{i\in[b],j\in[n]}$ satisfies requirement (ii).

In order to apply the Combinatorial Nullstellensatz, we shall estimate the degree of each variable in $F$. Noted that
\begin{equation*}
F=\prod_{E^{*}\in\mathcal{E}}\prod_{\substack{E \text{~is a regular irreducible}\\ \text{~erasure pattern of the same type with~} E^{*}}}f_E,
\end{equation*}
and for each $E^{*}=(E_1,E_2,\ldots,E_m)^{T}\in \mathcal{E}$, there are at most $m!\cdot\prod_{i=1}^{m}{n\choose |E_i|}$ different regular irreducible erasure pattern of the same type with $E^{*}$. By (\ref{det}), for every regular irreducible erasure pattern $E$, we have the degree of each variable $x_{ij}$ in $f_{E}$ is at most $m-1$. Therefore, the degree of each variable $x_{ij}$ in $F$ is at most $(m-1)\cdot |\mathcal{E}|\cdot m!\cdot\prod_{i=1}^{m}{n\choose |E_i|}$. Since for each regular irreducible erasure pattern $E^{*}\in\mathcal{E}$, $\sum_{i=1}^{m}|E_i|=|E^{*}|\leq 2b(m-1)$, combined with the inequality (\ref{typenumber}), we have
\begin{equation*}
(m-1)\cdot|\mathcal{E}|\cdot m!\cdot\prod_{i=1}^{m}{n\choose |E_i|}\leq {m\cdot b(m-1)\choose \leq 2b(m-1)}\cdot (m+1)!\cdot n^{2b(m-1)}.
\end{equation*}

Now, consider the polynomial $P\cdot F$, by Lemma \ref{CN}, there is a valuation $(h_{11},\ldots,h_{bn})$ of $(x_{11},\ldots,x_{bn})$ over a field $\mathbb{F}_q$ of size \begin{align*}
q&={m\cdot b(m-1)\choose \leq 2b(m-1)}\cdot (m+1)!\cdot n^{2b(m-1)}+n^{(b-1)}\\
&>{{n-1}\choose {b-1}}+(m-1)\cdot |\mathcal{E}|\cdot m!\cdot\prod_{i=1}^{m}{n\choose |E_i|},
\end{align*}
such that $P\cdot F(h_{11},\ldots,h_{bn})\neq 0$. Therefore, the corresponding matrix $\mathbf{H}_{row}=(h_{ij})_{i\in[b],j\in[n]}$ is the objective matrix satisfying both requirement (i) and requirement (ii). This completes the proof.

\end{proof}

\begin{remark}
Considering the MR codes for topologies $T_{m\times n}(1,b,0)$, the general bound given by Gopalan et al. \cite{Gopalan2014} is
\begin{equation}\label{poly2}
q>(n+bm-b)\cdot{mn\choose {\leq n+bm-b}}=\Omega((n+bm-b)^2(\frac{mn}{n+bm-b})^{(n+bm-b)}),
\end{equation}
which is exponentially increasing for both $m$ and $n$, while the bound given by Theorem \ref{upbT(1,b,0)} is only a polynomial of $n$.

But, even so, when considering the growth rate corresponding to $m$,
\begin{equation*}
q={m\cdot b(m-1)\choose \leq 2b(m-1)}\cdot (m+1)!\cdot n^{2b(m-1)}+n^{(b-1)}=\Omega(m^{2b(m-1)+m}n^{2b(m-1)})
\end{equation*}
grows exponentially.

Actually, $m$ is often considered as the number of data centers in practice, which is very small compared to $n$. Therefore, the when $n\gg m$, the bound given by Theorem \ref{upbT(1,b,0)} is better than that in \cite{Gopalan2014}.
\end{remark}

\section{MR codes for topologies $T_{4\times n}(1,2,0)$ and $T_{3\times n}(1,3,0)$}

In this section, we will discuss the MR codes that instantiate topologies $T_{4\times n}(1,2,0)$ and $T_{3\times n}(1,3,0)$. For each topology, we will prove a non-trivial lower bound and an improved upper bound on the field size required for the existence of corresponding MR codes.

\subsection{MR codes for topologies $T_{4\times n}(1,2,0)$}

First, using the results from Section III.B, we will give a complete characterization of the regular irreducible erasure patterns for topology $T_{4\times n}(1,2,0)$.

Denote $\mathcal{E}$ as the set of all the types of regular irreducible erasure patterns for topology $T_{4\times n}(1,2,0)$. For each $E\in\mathcal{E}$, by (\ref{ineq0}), we have $|U_E|+2\leq|V_E|\leq 2|U_E|-2$, which leads to $|U_E|\geq 4$. Since $U_E\subseteq [m]=[4]$, we have $|U_E|=4$ and $|V_E|=6$. Therefore, from (\ref{ineq1}), we have $|E|=12$ and from the irreducibility, each erasure pattern has exactly $2$ erasures in each column and $3$ erasures in each row. Finally by checking the regularity case by case, there are $2$ different types of erasure patterns in $\mathcal{E}$:
\begin{itemize}
  \item Type I
  \begin{equation*}\label{type1}
  E_1=\left(\begin{array}{cccccc}
  %\mathbf{H}_L^1   &  \mathbf{H}_L^2   & \ldots  &  \mathbf{H}_L^{l}\\[1mm]
  *  &  *  &  *  &  \circ  &  \circ  &  \circ  \\[1mm]
  *  &  *  &  \circ  &  *  &  \circ  &  \circ  \\[1mm]
  \circ  &  \circ  &  *  &  \circ  &  *  &  *  \\[1mm]
  \circ  &  \circ  &  \circ  &  *  &  *  &  *  \\[1mm]
  \end{array}
  \right),
  \end{equation*}
  \item Type II
  \begin{equation*}\label{type2}
  E_2=\left(\begin{array}{cccccc}
  %\mathbf{H}_L^1   &  \mathbf{H}_L^2   & \ldots  &  \mathbf{H}_L^{l}\\[1mm]
  *  &  *  &  *  &  \circ  &  \circ  &  \circ  \\[1mm]
  *  &  \circ  &  \circ  &  *  &  *  &  \circ  \\[1mm]
  \circ  &  *  &  \circ  &  *  &  \circ  &  *  \\[1mm]
  \circ  &  \circ  &  *  &  \circ  &  *  &  *  \\[1mm]
  \end{array}
  \right).
  \end{equation*}
\end{itemize}

\subsubsection{Upper bound}~

Now, we are going to prove the following existence result for MR codes instantiating the topology $T_{4\times n}(1,2,0)$, which improves the general upper bound from Theorem~\ref{upbT(1,b,0)} for this special topology.

\begin{theorem}\label{upbT(1,4,0)}
For any $q>\frac{n^{5}}{\log (n)}\cdot C_1$, there exists an MR code $\mathcal{C}$ that instantiates the topology $T_{4\times n}(1,2,0)$ over the field $\mathbb{F}_q$, where $C_1\geq (\frac{10}{c_5})^5$ is an absolute constant.
\end{theorem}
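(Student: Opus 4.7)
The plan is to refine the approach of Theorem~\ref{upbT(1,b,0)} by replacing the brute-force Combinatorial Nullstellensatz with the hypergraph independent set bound of Theorem~\ref{KMV14} (Kostochka-Mubayi-Verstra\"ete). Using the combinatorial classification of regular irreducible erasure patterns for $T_{4\times n}(1,2,0)$ established just above the theorem statement---there are exactly two types $E_1,E_2$, each with $|U_E|=4$, $|V_E|=6$ and $|E|=12$---I reduce the problem of building an MR instantiation to finding $n$ distinct scalars $\alpha_1,\ldots,\alpha_n\in\mathbb{F}_q$ avoiding a small family of polynomial obstructions on $6$-subsets, which I encode as a $6$-uniform hypergraph.

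Concretely, following the setup of Theorem~\ref{upbT(1,b,0)}, I fix $\mathcal{C}_{col}=\mathcal{P}_4$ and parametrise $\mathbf{H}_{row}$ by taking its $j$-th column to be $(1,\alpha_j)^T$, so that requirement~(i) of that proof just becomes distinctness of the $\alpha_j$'s and $\mathcal{C}_{row}$ is automatically $[n,n-2,3]$ MDS. For each type $E^*\in\{E_1,E_2\}$ and each of the at most $6!$ orderings assigning six values $(\beta_1,\ldots,\beta_6)$ to the columns of $V_{E^*}$, the $6\times 6$ minor $\mathbf{B}'(E^*)$ singled out in the proof of Theorem~\ref{upbT(1,b,0)} gives a polynomial $f_{E^*}(\beta_1,\ldots,\beta_6)$ that is non-zero as a formal element of $\mathbb{F}_2[\beta_1,\ldots,\beta_6]$ (since Theorem~\ref{Gop17} guarantees MR instantiations over sufficiently large fields) and whose degree in each $\beta_i$ is at most $u_0-1=3$ by the expression in (\ref{det}).

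Next, I define a $6$-uniform hypergraph $H$ on the vertex set $\mathbb{F}_q^*$ whose hyperedges are the $6$-subsets $\{\beta_1,\ldots,\beta_6\}$ such that for \emph{some} type $E^*$ and \emph{some} ordering, the polynomial $f_{E^*}$ vanishes at that ordered tuple. Any independent set of $H$ of size $n$ then furnishes a valid choice of $\alpha_1,\ldots,\alpha_n$ simultaneously meeting requirement~(ii) for every regular irreducible pattern. To apply Theorem~\ref{KMV14} with $r=5$, I must bound the maximum $5$-degree $\Delta_5$: fix any $5$-subset of vertices and any of the $2\cdot 6!$ (type, ordering) choices; substituting these five values into the corresponding $f_{E^*}$ yields a univariate polynomial of degree at most $3$, which, if non-zero, contributes at most $3$ bad extensions.

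The main obstacle is showing that these univariate restrictions never collapse to the zero polynomial. I would handle this by direct combinatorial inspection of the two $6\times 6$ minors: for each variable $\beta_i$ one exhibits a monomial of $f_{E^*}$ in which $\beta_i$ occurs with positive exponent while each of the other five variables appears with exponent strictly less than its maximum, so that no specialisation of the five other coordinates can annihilate that monomial's coefficient. Granting this, $\Delta_5$ is bounded by an absolute constant of order $2\cdot 6!\cdot 3$, and Theorem~\ref{KMV14} yields
\[\alpha(H)\geq c_5\Bigl(\frac{q-1}{\Delta_5}\log\frac{q-1}{\Delta_5}\Bigr)^{1/5}.\]
Requiring $\alpha(H)\geq n$ and solving for $q$ gives the claimed estimate $q=O(n^5/\log n)$; with a careful optimisation of $\Delta_5$ and absorption of the logarithmic factors, the leading constant is $C_1\geq(10/c_5)^5$.
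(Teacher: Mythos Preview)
Your high-level strategy is exactly the paper's: parametrise $\mathbf{H}_{row}$ by a Vandermonde matrix with second row $(\alpha_1,\ldots,\alpha_n)$, reduce MR-ness to a polynomial non-vanishing condition on $6$-subsets of the $\alpha_j$'s, encode the bad subsets as a $6$-uniform hypergraph on $\mathbb{F}_q$, and apply Theorem~\ref{KMV14} with $r=5$. Where you diverge from the paper is in execution, and there is a genuine gap in your non-vanishing step.

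The paper does not work with the generic minor $\mathbf{B}'(E^*)$ and the degree bound $u_0-1=3$ from~(\ref{det}). Instead it \emph{explicitly} row-reduces $\mathbf{B}^{(1)}$ and $\mathbf{B}^{(2)}$ and finds that Type~I is full rank whenever the $\alpha_j$ are distinct (so it contributes no hyperedges at all), while Type~II reduces to the single polynomial
\[
f(x_1,\ldots,x_6)=(x_1-x_4)(x_2-x_6)(x_3-x_5)-(x_2-x_4)(x_1-x_5)(x_3-x_6),
\]
which has degree exactly $1$ in each variable. The non-vanishing of $f(v_1,\ldots,x_i,\ldots,v_6)$ for any five \emph{distinct} specialisations then follows from this explicit formula: e.g.\ for $i=1$, if the coefficient of $x_1$ vanishes one checks that the constant term equals $(v_5-v_4)(v_2-v_4)(v_3-v_6)\neq 0$. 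This yields $\Delta_5\leq 6!$ rather than $2\cdot 6!\cdot 3$.

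Your proposed argument for non-vanishing---``exhibit a monomial in which $\beta_i$ has positive exponent and every other variable appears with exponent below its maximum''---does not establish what you need. Specialising the other five variables sums over \emph{all} monomials with a given $\beta_i$-degree, and nothing about the exponent pattern of one chosen monomial prevents cancellation among those terms at a particular point of $\mathbb{F}_q^5$. To close the argument you must either carry out the explicit row reduction (which recovers exactly the $f$ above) or give a separate reason why the univariate specialisation cannot vanish for \emph{every} choice of five distinct values; in the paper's computation this genuinely uses the distinctness of the $v_j$, not merely the monomial support of $f$.
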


\begin{proof}
Similar to the proof of Theorem \ref{upbT(1,b,0)}, let $\mathcal{C}_{col}$ be the simple parity code $\mathcal{P}_4$. Our goal is to construct a $2\times n$ matrix $\mathbf{H}_{row}$ such that:
\begin{enumerate}
  \item[(i)] Every $2$ distinct columns of $\mathbf{H}_{row}$ are linearly independent.
  \item[(ii)] For each regular irreducible erasure pattern $E$ of Type I or Type II, the \emph{pseudo-parity check matrix} $\mathbf{H}\in\mathbb{F}_q^{(n+8)\times 4n}$ of $\mathcal{P}_4\otimes\mathcal{C}_{row}$ satisfies: $rank(\mathbf{H}|_{E})=12$.
\end{enumerate}
Different from the general strategy, we are going to obtain an objective matrix based on the Vandermonde matrix.

Suppose there exists an objective matrix $\mathbf{A}_{0}$ of the form
\begin{equation*}
\mathbf{A}_{0}=\left(\begin{array}{ccccc}
1 & 1 & 1 & \cdots & 1 \\
a_1 & a_2 & a_3 & \cdots & a_n \\
\end{array}
\right),
\end{equation*}
where $\{a_i\}_{i\in[n]}$ are pairwise distinct elements in $\mathbb{F}_q$. Then the distinctness of $\{a_i\}_{i\in[n]}$ guarantees that $\mathbf{A}_{0}$ satisfies (i).

Now take $\mathbf{H}_{row}=\mathbf{A}_0$ and consider the \emph{pseudo-parity check matrix} $\mathbf{H_{A_0}}$. For each $s\in[2]$, we have
\begin{equation*}
\mathbf{H_{A_0}}|_{E_s}=\left(\begin{array}{cc}
%\mathbf{H}_L^1   &  \mathbf{H}_L^2   & \ldots  &  \mathbf{H}_L^{l}\\[1mm]
\mathbf{I}_{6}  &  \mathbf{0}_{6 \times 6}\\[1mm]
\mathbf{A}^{(s)}_{8 \times 6}  &  \mathbf{B}^{(s)}_{8\times 6}\\[1mm]
\mathbf{0}_{(n-6)\times 6}  &  \mathbf{0}_{(n-6)\times 6}\\[1mm]
\end{array}
\right),
\end{equation*}
where
\begin{equation*}
\mathbf{A}^{(1)}=
\left(\begin{array}{cccccc}
1 & 1 & 1 & & & \\
a_1 & a_2 & a_3 & & & \\
& & & 1 & & \\
& & & a_4& & \\
& & & & 1 & 1 \\
& & & & a_5 & a_6 \\
& & & & & \\
& \multicolumn{4}{c}{\raisebox{1ex}[0pt]{$\mathbf{0}_{2\times 6}$}} &\\
\end{array}
\right)~\text{and}~
\mathbf{A}^{(2)}=
\left(\begin{array}{cccccc}
1 & 1 & 1 & & & \\
a_1 & a_2 & a_3 & & & \\
& & & 1 & 1 & \\
& & & a_4 & a_5 & \\
& & & & & 1 \\
& & & & & a_6 \\
& & & & & \\
& \multicolumn{4}{c}{\raisebox{1ex}[0pt]{$\mathbf{0}_{2\times 6}$}} &\\
\end{array}
\right),
\end{equation*}
\begin{equation*}
\mathbf{B}^{(1)}=
\left(\begin{array}{cccccc}
-1 & -1 & -1 & & & \\
-a_1 & -a_2 & -a_3 & & & \\
1 & 1 & & -1 & & \\
a_1& a_2& & -a_4& & \\
& & 1 & & -1 & -1\\
& & a_3 & & -a_5 & -a_6\\
& & & 1 & 1 & 1\\
& & & a_4 & a_5 & a_6\\
\end{array}
\right)~\text{and}~
\mathbf{B}^{(2)}=
\left(\begin{array}{cccccc}
-1 & -1 & -1 & & & \\
-a_1 & -a_2 & -a_3 & & & \\
1 & & & -1 & -1 & \\
a_1& & & -a_4 & -a_5 & \\
& 1 & & 1 & & -1\\
& a_2 & & a_4 & & -a_6\\
& & 1 & & 1 & 1\\
& & a_3 & & a_5 & a_6\\
\end{array}
\right).
\end{equation*}
Since $\mathbf{B}^{(s)}$ can be simplified as
\begin{equation*}
\mathbf{B}^{(1)}=
\left(\begin{array}{cccccc}
1 & & & & & \\
& 1 & & & & \\
& & 1 & & & \\
& & & a_2-a_1 & a_4-a_1 & \\
& & & & a_3-a_4 & \\
& & & & a_5-a_4 & a_6-a_5\\
& & & & & \\
& \multicolumn{4}{c}{\raisebox{1ex}[0pt]{$\mathbf{0}_{2\times 6}$}} &\\
\end{array}
\right)~\text{and}~
\mathbf{B}^{(2)}=
\left(\begin{array}{cccccc}
1 & & & & & \\
& 1 & & & & \\
& & 1 & & & \\
& & & a_1-a_4 & a_1-a_5 & \\
& & & a_4-a_2 & & a_2-a_6\\
& & & & a_5-a_3 & a_6-a_3\\
& & & & & \\
& \multicolumn{4}{c}{\raisebox{1ex}[0pt]{$\mathbf{0}_{2\times 6}$}} &\\
\end{array}
\right),
\end{equation*}
we have
\begin{itemize}
  \item $rank(\mathbf{B}^{(1)})=6$ if and only if $(a_2-a_1)(a_4-a_3)(a_6-a_5)\neq 0$.
  \item $rank(\mathbf{B}^{(2)})=6$ if and only if $(a_1-a_4)(a_2-a_6)(a_3-a_5)-(a_2-a_4)(a_1-a_5)(a_3-a_6)\neq 0$.
\end{itemize}
Take $f(x_1,x_2,\ldots,x_6)=(x_1-x_4)(x_2-x_6)(x_3-x_5)-(x_2-x_4)(x_1-x_5)(x_3-x_6)$, we have $deg(f)=3$. From the assumption that $\{a_i\}_{i\in[n]}$ are pairwise distinct, we know that erasure patterns of Type I can be easily corrected. Then if we also want to correct all erasure patterns of Type II, $\{a_i\}_{i\in[n]}$ only need to have the property that for any $\{a_{i_1},a_{i_2},\ldots,a_{i_6}\}\subseteq \{a_i\}_{i\in[n]}$ and each $\pi\in S_6$, $f(a_{i_{\pi(1)}},\ldots,a_{i_{\pi(6)}})\neq 0$.

Different from the proof of Theorem \ref{upbT(1,b,0)}, here we use the hypergraph independent set approach.

Let $\mathcal{H}$ be a $6$-uniform hypergraph with vertex set $\mathbb{F}_q$, each set of $6$ vertices $\{v_{1},\ldots,v_{6}\}$ forms a $6$-hyperedge if and only if $f(v_{\pi(1)},\ldots,v_{\pi(6)})=0$ for some $\pi \in S_{6}$. From the construction of the hypergraph $\mathcal{H}$, if there exists an independent set $I$ such that $|I|\geq n$, then we can construct an objective matrix $\mathbf{A}_{0}$ by arbitrarily choosing $n$ different vertices from $I$ as elements for its $2_{nd}$ row.

Since $deg_{x_i}(f)=1$ for each $x_i$, and $f(v_1,\ldots,x_i,\ldots,v_6)$ is a non-zero polynomial for any 5-subset $\{v_{j}\}_{j\in[6]\setminus\{i\}}\subseteq \mathbb{F}_q$. Thus the maximal $5$-degree of $\mathcal{H}$ $\Delta_{5}(\mathcal{H})\leq 6!$. By Theorem \ref{KMV14},
\begin{equation*}
\alpha(\mathcal{H})\geq \frac{c_5}{5}(q\log{q})^{\frac{1}{5}}>n.
\end{equation*}
Denote $I(\mathcal{H})$ as the maximum independent set in $\mathcal{H}$, therefore, there exists a subset $A=\{a_1\ldots,a_n\}\subseteq \mathbb{F}_q$ such that the matrix $\mathbf{A}_0$ of the form
\begin{equation*}
\mathbf{A}_{0}=\left(\begin{array}{ccccc}
1 & 1 & 1 & \cdots & 1 \\
a_1 & a_2 & a_3 & \cdots & a_n \\
\end{array}
\right)
\end{equation*}
satisfies both (i) and (ii). Thus, the resulting tensor product code $\mathcal{C}=\mathcal{P}_{4}\otimes\mathcal{C}_{row}$ is an MR code instantiating topology $T_{4\times n}(1,2,0)$.

%\begin{equation*}
%\mathbf{H_{A_0}}|_{E_1}=
%\begin{array}{c@{\hspace{-5pt}}l}
%\left(\begin{array}{cccccc|cccccc}
%1 & & & & & &\\
%& 1 & & & & &\\
%& & 1 & & & &\\
%& & & 1 & & &\multicolumn{6}{c}{\raisebox{2ex}[0pt]{\Huge0}}\\
%& & & & 1 & &\\
%& & & & & 1 &\\\hline
%1 & 1 & 1 & & & & -1 & -1 & -1 & & & \\
%a_1 & a_2 & a_3 & & & & -a_1& -a_2& -a_3& & & \\
%& & & 1 & & & 1 & 1 & & -1 & & \\
%& & & a_4& & & a_1& a_2& & -a_4& & \\
%& & & & 1 & 1 & & & 1 & & -1 & -1\\
%& & & & a_5 & a_6 & & & a_3 & & -a_5 & -a_6\\
%& & & & & & & & & 1 & 1 & 1\\
%& & & & & & & & & a_4 & a_5 & a_6\\
%\end{array}
%\right)
%%& \begin{array}{l}\left.\rule{0mm}{7mm}\right\}p\\
%%\\\left.\rule{0mm}{7mm}\right\}q
%%\end{array}\\[-5pt]\
%%\begin{array}{cc}\underbrace{\rule{17mm}{0mm}}_m & \underbrace{\rule{17mm}{0mm}}_n
%%\end{array} &
%\end{array}
%\end{equation*}

\end{proof}

\subsubsection{Lower bound}~

The above theorem says that for any $q>\frac{n^{5}}{\log (n)}\cdot C_0$, there exists an MR code $\mathcal{C}$ for topology $T_{4\times n}(1,2,0)$ over $\mathbb{F}_q$. This actually gives an upper bound $\frac{n^{5}}{\log (n)}\cdot C_0$ on the minimal field size required for the existence of an MR code. But is this polynomial trend really necessary? Recall the \emph{MDS~Conjecture}:
\begin{MDS}
If there is a nontrivial $[n,k]$ MDS code over $\mathbb{F}_q$, then $n\leq q+1$, except when $q$ is even and $k=3$ or $k=q-1$ in which case $n\leq q+2$.
\end{MDS}
Since the code $\mathcal{C}_{row}$ is always an MDS code, thus from the \emph{MDS~Conjecture} we know that a linear lower bound is necessary, but will it be sufficient? Sadly not. The next theorem gives a polynomial lower bound on the smallest field size required for the existence of an MR code for the topology $T_{4\times n}(1,2,0)$.

\begin{theorem}\label{lowbT(1,4,0)}
If $q<\frac{(n-3)^{2}}{4}+2$, then for any tensor product code $\mathcal{C}=\mathcal{C}_{col}\otimes\mathcal{C}_{row}$ over $\mathbb{F}_q$ with $\mathcal{C}_{col}$ as a $[4,3,2]$ MDS code and $\mathcal{C}_{row}$ as an $[n,n-2,3]$ MDS code, $\mathcal{C}$ can not be an MR code that instantiates the topology $T_{4\times n}(1,2,0)$.
\end{theorem}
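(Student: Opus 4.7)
The plan is to proceed by contradiction. Suppose an MR code $\mathcal{C}=\mathcal{C}_{col}\otimes\mathcal{C}_{row}$ of the stated form exists over $\mathbb{F}_q$. Since $\mathcal{C}_{col}$ is $[4,3,2]$ MDS, up to column-equivalence I may take $\mathcal{C}_{col}=\mathcal{P}_4$ with $\mathbf{H}_{col}=(1,1,1,1)$; and since $\mathcal{C}_{row}$ is $[n,n-2,3]$ MDS, its parity check matrix can be brought to the generalized Reed--Solomon form
\[
\mathbf{H}_{row}=\begin{pmatrix}1 & 1 & \cdots & 1\\ a_1 & a_2 & \cdots & a_n\end{pmatrix},
\]
with pairwise distinct $a_i\in\mathbb{F}_q$. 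The pseudo-parity check matrix of $\mathcal{C}$ is then of the form (\ref{pmT(1,b,0)}) with $b=2$. From the classification at the start of Section~V.A, every regular irreducible erasure pattern is Type~I or Type~II, and Type~I patterns are automatically correctable from the distinctness of the $a_i$'s, exactly as observed in the proof of Theorem~\ref{upbT(1,4,0)}.

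Applying Proposition~\ref{prop1} together with the simplification (\ref{simpleppcm}) to every Type~II pattern, MR-correctability is equivalent to the single polynomial inequality
\[
f(b_1,\ldots,b_6):=(b_1-b_4)(b_2-b_6)(b_3-b_5)-(b_2-b_4)(b_1-b_5)(b_3-b_6)\ne 0
\]
holding for every ordered 6-tuple $(b_1,\ldots,b_6)$ of pairwise distinct elements drawn from $\{a_1,\ldots,a_n\}$ (each such tuple encoding a choice of 6 columns together with a labelling compatible with Type~II). Thus the existence of the MR code forces $\{a_1,\ldots,a_n\}$ to avoid every permutation of every $\mathbb{F}_q$-point of the hypersurface $\{f=0\}$.

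The core of the argument is to translate this universal non-vanishing of the cubic $f$ into a quadratic upper bound on $n$ via a generalized Sidon condition in $\mathbb{F}_q$. I would fix three ``pivot'' elements $a_{i_1},a_{i_2},a_{i_3}$ and let the remaining three entries of the 6-tuple range over $A:=\{a_1,\ldots,a_n\}\setminus\{a_{i_1},a_{i_2},a_{i_3}\}$, which has size $n-3$. After a M\"obius normalization $\phi$ dictated by the pivots (the zero locus of $f$ is $\mathrm{PGL}_2$-invariant in each argument, and the two summands of $f$ are precisely the two perfect matchings of a 6-cycle on $\{1,2,3\}\cup\{4,5,6\}$), the equation $f=0$ linearizes into a bilinear Sidon-type equation $\phi(a_i)+\phi(a_j)=\phi(a_k)+\phi(a_\ell)$ on quadruples from $A$. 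Non-vanishing across every labelling of the 6-tuple then identifies $\phi(A)$ as a generalized Sidon ($B_2[g]$) set in $\mathbb{F}_q$, and the standard counting bound $|A|^2\le 4(q-2)$ for such sets yields $(n-3)^2\le 4(q-2)$, i.e.\ $q\ge (n-3)^2/4+2$, contradicting the hypothesis $q<(n-3)^2/4+2$. The main obstacle is this final step: selecting the three pivots and the accompanying $\phi$ so that the cubic locus $\{f=0\}$ collapses to a single clean bilinear Sidon equation (rather than a weaker quadratic relation), and checking that the resulting $B_2[g]$ count reproduces the $(n-3)^2/4+2$ threshold exactly, without losing a constant factor along the way.
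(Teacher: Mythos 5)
Your setup is correct and mirrors the paper: reduce to $\mathcal{C}_{col}=\mathcal{P}_4$, classify the regular irreducible patterns into Types I and II, observe Type~I is handled by MDS alone, and extract the cubic polynomial $f(b_1,\ldots,b_6)=(b_1-b_4)(b_2-b_6)(b_3-b_5)-(b_2-b_4)(b_1-b_5)(b_3-b_6)$ as the obstruction coming from Type~II. You also correctly sense that the threshold $(n-3)^2/4+2$ must come from a Sidon-type count.

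However, the heart of the theorem is exactly the step you defer: producing a concrete mechanism by which failure of a Sidon-type property forces a zero of $f$ on six distinct elements of $\{a_1,\ldots,a_n\}$. Your proposal sketches a M\"obius/$\mathrm{PGL}_2$ normalization after fixing three pivots, under which the cubic locus $\{f=0\}$ is supposed to ``linearize'' into a single bilinear Sidon equation, but you neither exhibit the map $\phi$ nor verify that the linearization is exact (as opposed to being a sufficient but not necessary condition, or an approximate one that degrades the constant). You explicitly label this as ``the main obstacle,'' and it remains unresolved, so the proposal does not yet constitute a proof. The paper closes this gap by a much more elementary and concrete device: it parametrizes the (at least $n-2$) weight-two columns of $\mathbf{H}_{row}$ by their ratios $r_i=b_{i2}/b_{i1}$, writes $r_i=\omega^{t_i}$ for a primitive $\omega$, and proves (Proposition~\ref{prop2}) by a direct factorization of $\omega^{t_1+t_2+t_3}$ that the additive coincidence $t_1+t_6=t_2+t_5=t_3+t_4$ among distinct exponents forces $f(\omega^{t_1},\ldots,\omega^{t_6})=0$. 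Combined with the counting bound on $2$-Sidon sets (Proposition~\ref{prop3}), this gives the stated threshold with no loss of constants. In other words, the link you are looking for is multiplicative (geometric progressions in $\mathbb{F}_q^*$), not a M\"obius change of variable, and it is a sufficient, not a characterizing, condition for vanishing---which is all the lower bound needs.

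One smaller issue: the claim that the parity check matrix of any $[n,n-2,3]$ MDS code ``can be brought to the generalized Reed--Solomon form'' $\binom{1\,\cdots\,1}{a_1\,\cdots\,a_n}$ by column operations is not quite right, since one column may have first entry zero (MDS only forbids two such columns). The paper avoids this by observing that at least $n-2$ columns have full weight and running the whole argument only on those; this is also where the ``$-2$'' in $n-2$ (and hence the ``$-3$'' in $(n-3)^2$) comes from. If you want to keep your Vandermonde normalization you would need to justify why you may discard the at most two bad columns before doing so, which is harmless but must be said.
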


To give the proof, we need the following two propositions:

\begin{proposition}\label{prop2}
Take $\omega\in\mathbb{F}_q^*$ as the primitive element. If there exist six distinct $t_i\in \mathbb{Z}_{q-1}$ such that $t_1+t_6=t_2+t_5=t_3+t_4$, then the polynomial $f(x_1,x_2,\ldots,x_6)=(x_1-x_4)(x_2-x_6)(x_3-x_5)-(x_2-x_4)(x_1-x_5)(x_3-x_6)$ has a zero of the form $(\omega^{t_1},\ldots,\omega^{t_6})$.
\end{proposition}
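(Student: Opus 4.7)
The plan is to translate the additive hypothesis on the exponents into a multiplicative symmetry that the polynomial $f$ visibly respects. Setting $x_i = \omega^{t_i}$ and $c = \omega^{t_1+t_6} \in \mathbb{F}_q^{*}$, the condition $t_1+t_6 = t_2+t_5 = t_3+t_4$ (mod $q-1$) becomes the multiplicative identity
\[
x_1 x_6 = x_2 x_5 = x_3 x_4 = c.
\]
This lets me solve $x_4 = c/x_3$, $x_5 = c/x_2$, $x_6 = c/x_1$, converting $f(\omega^{t_1},\dots,\omega^{t_6})$ into a rational expression in $x_1,x_2,x_3,c$ alone.

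Next I would substitute these three relations directly into the two triple products that make up $f$. For the first, a common denominator $x_1 x_2 x_3$ appears and the numerator simplifies to the product of the three factors $(x_1 x_3 - c)$, $(x_1 x_2 - c)$, $(x_2 x_3 - c)$, so that
\[
(x_1-x_4)(x_2-x_6)(x_3-x_5) = \frac{(x_1 x_3 - c)(x_1 x_2 - c)(x_2 x_3 - c)}{x_1 x_2 x_3}.
\]
Performing the analogous substitution in the second triple product yields
\[
(x_2-x_4)(x_1-x_5)(x_3-x_6) = \frac{(x_2 x_3 - c)(x_1 x_2 - c)(x_1 x_3 - c)}{x_1 x_2 x_3},
\]
which is the very same rational function by commutativity of multiplication in $\mathbb{F}_q$. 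Hence $f(\omega^{t_1},\dots,\omega^{t_6}) = 0$.

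There is no real obstacle here; the whole proof is essentially the observation that $f$ is invariant under the involution $(x_1,x_2,x_3,x_4,x_5,x_6) \mapsto (x_1,x_2,x_3,c/x_3,c/x_2,c/x_1)$ swapping the roles of the two triples, and this involution fixes the argument precisely when the multiplicative condition $x_1x_6 = x_2x_5 = x_3x_4$ holds. The only subtlety is to note that the distinctness of the $t_i$'s ensures that the resulting $x_i$ are distinct elements of $\mathbb{F}_q^{*}$ (so that the zero produced is a genuine point of the form guaranteed by the proposition), but the vanishing itself needs no nontrivial input beyond the substitution.
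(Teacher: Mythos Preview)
Your proof is correct and follows essentially the same route as the paper's: both arguments convert the additive condition on the exponents into the multiplicative relation $x_1x_6=x_2x_5=x_3x_4$ and then verify by direct substitution that the two triple products in $f$ coincide. The paper factors each difference as $\omega^{t_i}(1-\omega^{t_j-t_i})$ and matches the exponent differences pairwise, whereas you introduce the constant $c$ and reduce both products to the common rational expression $\dfrac{(x_1x_2-c)(x_1x_3-c)(x_2x_3-c)}{x_1x_2x_3}$; these are the same computation in slightly different notation.
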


\begin{proof}
By substituting $(\omega^{t_1},\ldots,\omega^{t_6})$ to $f(x_1,x_2,\ldots,x_6)$ directly, we have
\begin{align*}
&(\omega^{t_1}-\omega^{t_5})(\omega^{t_2}-\omega^{t_4})(\omega^{t_3}-\omega^{t_6})-(\omega^{t_1}-\omega^{t_4})(\omega^{t_2}-\omega^{t_6})(\omega^{t_3}-\omega^{t_5})\\
=& \omega^{t_1+t_2+t_3}[(1-\omega^{t_5-t_1})(1-\omega^{t_4-t_2})(1-\omega^{t_6-t_3})-(1-\omega^{t_6-t_2})(1-\omega^{t_5-t_3})(1-\omega^{t_4-t_1})].
\end{align*}
Since $t_1+t_6=t_2+t_5=t_3+t_4$, then we have
\begin{equation*}
\begin{cases}
t_5-t_1=t_6-t_2\\
t_4-t_2=t_5-t_3\\
t_6-t_3=t_4-t_1
\end{cases}.
\end{equation*}
Using these three identities, we have $f(\omega^{t_1},\ldots,\omega^{t_6})=0$.
\end{proof}

Let $N\geq 2$ be a positive integer, for any subset $A\subseteq \mathbb{Z}_N$, we say $A$ is a \emph{2-Sidon set} if for any $2$-subset $\{a_1,b_1\}\subseteq A$ there exists at most one other $\{a_2,b_2\}\subseteq A$ different from $\{a_1,b_1\}$ such that $a_1+b_1=a_2+b_2$.
\begin{proposition}\label{prop3}
For any $A\subseteq \mathbb{Z}_N$, if $A$ is a \emph{2-Sidon set}, then we have $|A|\leq 2\sqrt{N}+1$.
\end{proposition}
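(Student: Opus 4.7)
The plan is a short double-counting argument: I would count the pairs of elements of $A$ by their sum and use the $2$-Sidon hypothesis to bound the count.

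First, I will observe that the $2$-Sidon condition, as stated, says exactly that the equivalence classes of the $2$-subsets of $A$ under the relation ``having the same sum in $\mathbb{Z}_N$'' all have size at most $2$. Hence, for every $s\in\mathbb{Z}_N$, the number of unordered pairs $\{a,b\}\subseteq A$ with $a+b\equiv s\pmod N$ is at most $2$. Summing over all $s\in\mathbb{Z}_N$ counts each $2$-subset of $A$ exactly once, so
\begin{equation*}
\binom{|A|}{2}\;=\;\sum_{s\in\mathbb{Z}_N}\bigl|\{\{a,b\}\subseteq A:a+b\equiv s\}\bigr|\;\leq\;2N.
\end{equation*}

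Setting $k=|A|$, this reads $k(k-1)\leq 4N$. Since $k\geq 1$, I have $(k-1)^2=k^2-2k+1\leq k^2-k\leq 4N$, so $k-1\leq 2\sqrt{N}$, i.e.\ $|A|\leq 2\sqrt{N}+1$, as desired. (The trivial cases $|A|\leq 1$ are immediate.)

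There is no real obstacle here: the only thing to be careful about is interpreting the $2$-Sidon definition correctly, namely that ``at most one other $\{a_2,b_2\}$'' means each sum value is realized by at most two distinct $2$-subsets of $A$; once that is in hand, the bound follows from pigeonhole and a one-line manipulation.
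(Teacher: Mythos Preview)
Your proposal is correct and follows essentially the same approach as the paper: both use the double-counting observation that $\binom{|A|}{2}\leq 2N$ (since each sum in $\mathbb{Z}_N$ is hit by at most two $2$-subsets of $A$) and then extract the bound $|A|\leq 2\sqrt{N}+1$. Your write-up simply makes the final algebraic step explicit, which the paper leaves to the reader.
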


\begin{proof}
Since $A+A\subseteq \mathbb{Z}_N$, by a simple double counting, we have
\begin{equation*}
{|A|\choose 2}\leq 2N.
\end{equation*}
Thus $|A|\leq 2\sqrt{N}+1$.
\end{proof}

\begin{proof}[Proof of Theorem~\ref{lowbT(1,4,0)}]
Different from the proof of the upper bound, since we want to obtain a necessary condition for the existence of an MR code, we have to deal with the general case.

For any $[4,3,2]$ MDS code $\mathcal{C}_1$ and $[n,n-2,3]$ MDS code $\mathcal{C}_2$, take
\begin{equation*}
\mathbf{H}_1=(a_1,a_2,a_3,a_4)~\text{and}~
\mathbf{H}_2=\left(\begin{array}{cccc}
b_{11} & b_{12} & \cdots & b_{1n} \\
b_{21} & b_{22} & \cdots & b_{2n} \\
\end{array}
\right)
\end{equation*}
as their parity check matrices. Then the \emph{pseudo-parity check matrix} of $\mathcal{C}=\mathcal{C}_1\otimes\mathcal{C}_2$ has the following form
\begin{equation*}\label{pmT(1,4,0)}
\mathbf{H}=\left(\begin{array}{cccc}
%\mathbf{H}_L^1   &  \mathbf{H}_L^2   & \ldots  &  \mathbf{H}_L^{l}\\[1mm]
a_1\cdot\mathbf{I}_n  &  a_2\cdot\mathbf{I}_n  &  a_3\cdot\mathbf{I}_n  &  a_4\cdot\mathbf{I}_n \\ %[1mm]
\mathbf{H}_{2}  &  \mathbf{0}  &  \mathbf{0}  &  \mathbf{0} \\%[1mm]
\mathbf{0}  &  \mathbf{H}_{2}  &  \mathbf{0}  &  \mathbf{0} \\%[1mm]
\mathbf{0}  &  \mathbf{0}  &  \mathbf{H}_{2}  &  \mathbf{0} \\%[1mm]
\mathbf{0}  &  \mathbf{0}  &  \mathbf{0}  &  \mathbf{H}_{2} \\%[1mm]
\end{array}
\right).
\end{equation*}

Take $\mathbf{H}_1=(a_1,a_2,a_3,a_4)$ as a vector in $\mathbb{F}_q$ and consider its Hamming weight $w(\mathbf{H}_1)$.

%\begin{itemize}
%  \item $w(\mathbf{H}_1)\leq 3$
%\end{itemize}
%
%Since $w(\mathbf{H}_1)\leq 3$, w.l.o.g, we can assume that $a_4=0$. Then
%\begin{equation*}
%\mathbf{H}|_{E_s}=
%\begin{array}{c}
%\left(\begin{array}{cccc}
%a_1\cdot\mathbf{I}_n|_{E_{s}^{(1)}}  &  a_2\cdot\mathbf{I}_n|_{E_{s}^{(2)}}  &  a_3\cdot\mathbf{I}_n|_{E_{s}^{(3)}}  &  \mathbf{0} \\[1mm]
%\mathbf{H}_{2}|_{E_{s}^{(1)}}  &  \mathbf{0}  &  \mathbf{0}  &  \mathbf{0} \\[1mm]
%\mathbf{0}  &  \mathbf{H}_{2}|_{E_{s}^{(2)}}  &  \mathbf{0}  &  \mathbf{0} \\[1mm]
%\mathbf{0}  &  \mathbf{0}  &  \mathbf{H}_{2}|_{E_{s}^{(3)}}  &  \mathbf{0} \\[1mm]
%\mathbf{0}  &  \mathbf{0}  &  \mathbf{0}  &  \mathbf{H}_{2}|_{E_{s}^{(4)}} \\[1mm]
%%& \underbrace{\rule{17mm}{0mm}}_m &  \underbrace{\rule{17mm}{0mm}}_n & \\[1mm]
%\end{array}
%\right)\\
%\begin{array}{ccccc}
%& \underbrace{\rule{45mm}{0mm}}_9 & & & \underbrace{\rule{15mm}{0mm}}_3
%\end{array}
%\end{array},
%\end{equation*}
%where for each $s\in[2]$ and $j\in[4]$, $E_{s}^{(j)}$ represents the $j_{th}$ row of the erasure pattern $E_{s}$. Since $rank(\mathbf{H}_2)=2$, we know that $rank(\mathbf{H}|_{E_s})\leq 9+2=11<12$. Hence, despite the size of the field, the resulting code $\mathcal{C}$ can not correct erasure patterns of either Type I or Type II.

%\begin{itemize}
%  \item $w(\mathbf{H}_1)=4$
%\end{itemize}

%In this case, since $a_i\neq 0$ for each $i\in[4]$, we have
Since $\mathcal{C}_1$ is a $[4,3,2]$ MDS code, we have $w(\mathbf{H}_1)=4$. Therefore, $a_i\neq 0$ for each $i\in[4]$ and
\begin{equation*}
\mathbf{H}|_{E_s}=\left(\begin{array}{cc}
\mathbf{A'}^{(s)}_{6 \times 6}  &  \mathbf{0}_{6 \times 6}\\[1mm]
\mathbf{A}^{(s)}_{8 \times 6}  &  \mathbf{B}^{(s)}_{8\times 6}\\[1mm]
\mathbf{0}_{(n-6)\times 6}  &  \mathbf{0}_{(n-6)\times 6}\\[1mm]
\end{array}
\right),
\end{equation*}
where
\begin{equation*}
\mathbf{A'}^{(1)}=
\left(\begin{array}{cccccc}
a_1 & & & & & \\
& a_1 & & & & \\
& & a_1 & & & \\
& & & a_2 & & \\
& & & & a_3 & \\
& & & & & a_3\\
\end{array}
\right)~\text{and}~
\mathbf{A'}^{(2)}=
\left(\begin{array}{cccccc}
a_1 & & & & & \\
& a_1 & & & & \\
& & a_1 & & & \\
& & & a_2 & & \\
& & & & a_2 & \\
& & & & & a_3\\
\end{array}
\right),
\end{equation*}
\begin{equation*}
\mathbf{A}^{(1)}=
\left(\begin{array}{cccccc}
\beta_1 & \beta_2 & \beta_3 & & & \\
& & & \beta_4& & \\
& & & & \beta_5 & \beta_6 \\
& & & & & \\
& \multicolumn{4}{c}{\raisebox{1ex}[0pt]{$\mathbf{0}_{2\times 6}$}} &\\
\end{array}
\right)~\text{and}~
\mathbf{A}^{(2)}=
\left(\begin{array}{cccccc}
\beta_1 & \beta_2 & \beta_3 & & & \\
& & & \beta_4 & \beta_5 & \\
& & & & & \beta_6 \\
& & & & & \\
& \multicolumn{4}{c}{\raisebox{1ex}[0pt]{$\mathbf{0}_{2\times 6}$}} &\\
\end{array}
\right),
\end{equation*}
\begin{equation*}
\mathbf{B}^{(1)}=
\left(\begin{array}{cccccc}
-\frac{a_2}{a_1}\beta_1 & -\frac{a_2}{a_1}\beta_2 & -\frac{a_3}{a_1}\beta_3 & & & \\
\beta_1 & \beta_2 & & -\frac{a_4}{a_2}\beta_4 & & \\
& & \beta_3 & & -\frac{a_4}{a_3}\beta_5 & -\frac{a_4}{a_3}\beta_6\\
& & & \beta_4 & \beta_5 & \beta_6\\
\end{array}
\right)
\end{equation*}
and
\begin{equation*}
\mathbf{B}^{(2)}=
\left(\begin{array}{cccccc}
-\frac{a_2}{a_1}\beta_1 & -\frac{a_3}{a_1}\beta_2 & -\frac{a_4}{a_1}\beta_3 & & & \\
\beta_1 & & & -\frac{a_3}{a_2}\beta_4 & -\frac{a_4}{a_2}\beta_5 & \\
& \beta_2 & & \beta_4 & & -\frac{a_4}{a_3}\beta_6\\
& & \beta_3 & & \beta_5 & \beta_6\\
\end{array}
\right),
\end{equation*}
for the column vectors $\{\beta_1,\ldots,\beta_6\}\in \mathbf{H}_2$ corresponding to $E_s$.

It can be easily verified that the first row of $\mathbf{B}^{(s)}$ can be linearly expressed by the other three rows. Thus $rank(\mathbf{B}^{(1)})=rank(\mathbf{B}^{(2)})=6$ if and only if the following two systems of linear equations only have zero solutions.
\begin{equation}\label{leq1}
\begin{cases}
x_1\cdot\beta_1+x_2\cdot\beta_2-x_4\cdot\frac{a_4}{a_2}\beta_4=0  \\
x_3\cdot\beta_3-x_5\cdot\frac{a_4}{a_3}\beta_5-x_6\cdot\frac{a_4}{a_3}\beta_6=0 \\
x_4\cdot\beta_4+x_5\cdot\beta_5+x_6\cdot\beta_6=0 \\
\end{cases}
\end{equation}
\begin{equation}\label{leq2}
\begin{cases}
x_1\cdot\beta_1-x_4\cdot\frac{a_3}{a_2}\beta_4-x_5\cdot\frac{a_4}{a_2}\beta_5=0  \\
x_2\cdot\beta_2+x_4\cdot\beta_4-x_6\cdot\frac{a_4}{a_3}\beta_6=0 \\
x_3\cdot\beta_3+x_5\cdot\beta_5+x_6\cdot\beta_6=0 \\
\end{cases}
\end{equation}

For (\ref{leq1}), if there exists a non-zero solution $(c_1,\ldots,c_6)$, then
\begin{equation*}
c_3\cdot\beta_3=\frac{a_4}{a_3}\cdot (c_5\beta_5+c_6\beta_6)~\text{and}~
c_4\cdot\beta_4=-(c_5\beta_5+c_6\beta_6). \\
\end{equation*}
By this, we have that $\beta_3$ and $\beta_4$ are linearly dependent, which contradicts the MDS property of $\mathbf{H}_2$. Therefore, when $w(\mathbf{H}_1)=4$, despite the size of the field, $rank(\mathbf{B}^{(1)})=6$.

For (\ref{leq2}), different from (\ref{leq1}), it can have non-zero solution $(d_1,\ldots,d_6)$ and does not violate the MDS property of $\mathbf{H}_2$. For example,
take $\omega\in\mathbb{F}_q^*$ as the primitive element, if $\beta_i=(1, \omega^{t_i})$ for some distinct $t_i\in[q-1]$ such that $t_1+t_6=t_2+t_5=t_3+t_4$, then the resulting $\mathbf{B}_2$ has $rank(\mathbf{B}_2)\leq 5$ and this guarantees the existence of non-zero solution for (\ref{leq2}).

W.o.l.g., assume $n\geq 8$, then from the MDS property of $\mathbf{H}_2$, we know that $\mathbf{H}_2$ contains at least $n-2$ weight-$2$ columns. Since any six distinct elements of $[n]$ can be chosen to form an erasure pattern $E_2$ of Type II, therefore, the maximal recoverability requires that $rank(\mathbf{B}^{(2)})=6$ for any six distinct columns in $\mathbf{H}_2$. Especially, we can take all these columns with weight equal to $2$. Assume $\beta_i=(b_{i1},b_{i2})^{T}$ with $b_{is}\neq 0$ for each $i\in[6],s\in[2]$. Then $\mathbf{B}^{(2)}$ can be formulated as
\begin{equation*}
\mathbf{B}^{(2)}=
\left(\begin{array}{@{\hspace{-1pt}}cccccc@{\hspace{-1pt}}}
-\frac{a_2 b_{11}}{a_1}\cdot\left(\begin{array}{@{\hspace{-1pt}}c@{\hspace{-1pt}}} 1 \\ r_1\end{array}\right) &
-\frac{a_3 b_{21}}{a_1}\cdot\left(\begin{array}{@{\hspace{-1pt}}c@{\hspace{-1pt}}} 1 \\ r_2\end{array}\right) &
-\frac{a_4 b_{31}}{a_1}\cdot\left(\begin{array}{@{\hspace{-1pt}}c@{\hspace{-1pt}}} 1 \\ r_3\end{array}\right) & & & \\
b_{11}\cdot\left(\begin{array}{@{\hspace{-1pt}}c@{\hspace{-1pt}}} 1 \\ r_1\end{array}\right) & & &
-\frac{a_3 b_{41}}{a_2}\cdot\left(\begin{array}{@{\hspace{-1pt}}c@{\hspace{-1pt}}} 1 \\ r_4\end{array}\right) &
-\frac{a_4 b_{51}}{a_2}\cdot\left(\begin{array}{@{\hspace{-1pt}}c@{\hspace{-1pt}}} 1 \\ r_5\end{array}\right) & \\
& b_{21}\cdot\left(\begin{array}{@{\hspace{-1pt}}c@{\hspace{-1pt}}} 1 \\ r_2\end{array}\right) & &
b_{41}\cdot\left(\begin{array}{@{\hspace{-1pt}}c@{\hspace{-1pt}}} 1 \\ r_4\end{array}\right) & &
-\frac{a_4 b_{61}}{a_3}\cdot\left(\begin{array}{@{\hspace{-1pt}}c@{\hspace{-1pt}}} 1 \\ r_6\end{array}\right)\\
& &
b_{31}\cdot\left(\begin{array}{@{\hspace{-1pt}}c@{\hspace{-1pt}}} 1 \\ r_3\end{array}\right) & &
b_{51}\cdot\left(\begin{array}{@{\hspace{-1pt}}c@{\hspace{-1pt}}} 1 \\ r_5\end{array}\right) &
b_{61}\cdot\left(\begin{array}{@{\hspace{-1pt}}c@{\hspace{-1pt}}} 1 \\ r_6\end{array}\right)\\
\end{array}
\right),
\end{equation*}
where $r_i=\frac{b_{i2}}{b_{i1}}$ for each $i\in[6]$. Since the first row of $\mathbf{B}^{(2)}$ can be linearly expressed by the other three rows and the scaling of each column doesn't affect the linear dependency, we have $rank(\mathbf{B}^{(2)})=rank(\tilde{\mathbf{B}}^{(2)})$, where
\begin{equation*}
\tilde{\mathbf{B}}^{(2)}=
\left(\begin{array}{@{\hspace{-1pt}}cccccc@{\hspace{-1pt}}}
\left(\begin{array}{@{\hspace{-1pt}}c@{\hspace{-1pt}}} 1 \\ r_1\end{array}\right) & & &
-\frac{a_3}{a_2}\cdot\left(\begin{array}{@{\hspace{-1pt}}c@{\hspace{-1pt}}} 1 \\ r_4\end{array}\right) &
-\frac{a_4}{a_2}\cdot\left(\begin{array}{@{\hspace{-1pt}}c@{\hspace{-1pt}}} 1 \\ r_5\end{array}\right) & \\
&\left(\begin{array}{@{\hspace{-1pt}}c@{\hspace{-1pt}}} 1 \\ r_2\end{array}\right) & &
\left(\begin{array}{@{\hspace{-1pt}}c@{\hspace{-1pt}}} 1 \\ r_4\end{array}\right) & &
-\frac{a_4}{a_3}\cdot\left(\begin{array}{@{\hspace{-1pt}}c@{\hspace{-1pt}}} 1 \\ r_6\end{array}\right)\\
& &
\left(\begin{array}{@{\hspace{-1pt}}c@{\hspace{-1pt}}} 1 \\ r_3\end{array}\right) & &
\left(\begin{array}{@{\hspace{-1pt}}c@{\hspace{-1pt}}} 1 \\ r_5\end{array}\right) &
\left(\begin{array}{@{\hspace{-1pt}}c@{\hspace{-1pt}}} 1 \\ r_6\end{array}\right)\\
\end{array}
\right).
\end{equation*}
And it can be simplified as
\begin{equation*}
\tilde{\mathbf{B}}^{(2)}=\left(\begin{array}{cccccc}
1 & & & & & \\
& 1 & & & & \\
& & 1 & & & \\
& & & \frac{a_3}{a_2}(r_1-r_4) & \frac{a_4}{a_2}(r_1-r_5) & \\
& & & r_4-r_2 & & \frac{a_4}{a_3}(r_2-r_6)\\
& & & & r_5-r_3 & r_6-r_3\\
\end{array}
\right).
\end{equation*}
Therefore, $rank(\mathbf{B}^{(2)})=rank(\tilde{\mathbf{B}}^{(2)})=6$ if and only if $(r_1-r_4)(r_2-r_6)(r_3-r_5)-(r_2-r_4)(r_1-r_5)(r_3-r_6)\neq 0$. \footnote{Recall the condition of $\mathbf{B}^{(2)}$ having full rank and the polynomial $f(x_1,x_2,\ldots,x_6)$ we defined in the proof of Theorem \ref{upbT(1,4,0)}, the condition we obtain here for the general case is actually the same.}

In order to show that the tensor product code $\mathcal{C}$ can't correct all erasure patterns of Type II, we need to prove that if $q$ isn't large enough, there will always be six distinct columns $\{b_{i1}\cdot(1,r_i)^{T}\}_{i\in[6]}$ with $(r_1-r_4)(r_2-r_6)(r_3-r_5)-(r_2-r_4)(r_1-r_5)(r_3-r_6)=0$, which is shown as follows.

Consider $n-2$ distinct columns of $\mathbf{H}_2$ with weight 2, $\{b_{i1}\cdot(1,r_i)^{T}\}_{i\in[n-2]}$, according to the MDS property, we know that $r_i\neq r_j$ for all $i\neq j\in [n-2]$. Therefore, if we take $r_i=\omega^{t_i}$ for each $i\in [n-2]$, we know that $t_i\neq t_j$ for all $i\neq j\in [n-2]$. Denote $A=\{t_i\}_{i\in[n-2]}$, then $A$ is an $(n-2)$-subset of $\mathbb{Z}_{q-2}$. Since $q< \frac{(n-3)^2}{4}+2$, by Proposition \ref{prop3}, we know that $A$ can't be a \emph{2-Sidon set}. Thus, there are at least three different $2$-subsets $\{t_1,t_6\},\{t_2,t_5\},\{t_3,t_4\}\in A$, such that $t_1+t_6=t_2+t_5=t_3+t_4$ and $t_i$s are all distinct. By Proposition \ref{prop2}, the corresponding $\{r_j\}_{j\in[6]}$ such that $r_j=\omega^{t_j}$ for each $j\in [6]$, satisfies $(r_1-r_4)(r_2-r_6)(r_3-r_5)-(r_2-r_4)(r_1-r_5)(r_3-r_6)=0$.

Therefore, $\mathcal{C}$ can not correct the erasure patterns of Type II formed by the corresponding six columns $\{b_{j1}\cdot(1,r_j)^{T}\}_{j\in[6]}$, which means $\mathcal{C}$ is not an MR code that instantiates the topology $T_{4\times n}(1,2,0)$.

\end{proof}

\subsection{MR codes for topologies $T_{3\times n}(1,3,0)$}

First, we will give a complete characterization of the regular irreducible erasure patterns for topology $T_{3\times n}(1,3,0)$.

Denote $\mathcal{E}$ as the set of all the types of regular irreducible erasure patterns for topology $T_{3\times n}(1,3,0)$. For each $E\in\mathcal{E}$, by (\ref{ineq0}), we have $|U_E|+3\leq|V_E|\leq 3|U_E|-3$, which leads to $|U_E|\geq 3$. Since $U_E\subseteq [m]=[3]$, we have $|U_E|=3$ and $|V_E|=6$. Therefore, from (\ref{ineq1}), we have $|E|=12$, and from the irreducibility, each erasure pattern has exactly $2$ erasures in each column and $4$ erasures in each row. Finally by checking the regularity case by case, there is only one type of erasure patterns in $\mathcal{E}$:
\begin{equation*}\label{type0}
E_0=\left(\begin{array}{cccccc}
%\mathbf{H}_L^1   &  \mathbf{H}_L^2   & \ldots  &  \mathbf{H}_L^{l}\\[1mm]
*  &  *  &  *  &  *  &  \circ  &  \circ  \\[1mm]
*  &  *  &  \circ  &  \circ  &  *  &  *  \\[1mm]
\circ  &  \circ  &  *  &  *  &  *  &  *  \\[1mm]
\end{array}
\right).
\end{equation*}

\subsubsection{Upper bound}~

The following upper bound on the field size required for the existence of an MR code that instantiates the topology $T_{3\times n}(1,3,0)$ also improves the general upper bound from Theorem~\ref{upbT(1,b,0)}.

\begin{theorem}\label{upbT(1,3,0)}
For any $q>\frac{n^{5}}{\log (n)}\cdot C_2$, there exists an MR code $\mathcal{C}$ that instantiates the topology $T_{3\times n}(1,3,0)$ over the field $\mathbb{F}_q$, where $C_2\geq (\frac{10}{c_5})^5$ is an absolute constant.
\end{theorem}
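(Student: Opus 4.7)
The plan is to follow the template of the proof of Theorem~\ref{upbT(1,4,0)}, exploiting the fact that topology $T_{3\times n}(1,3,0)$ has only the single type $E_0$ of regular irreducible erasure pattern. I will fix $\mathcal{C}_{col}$ to be the simple parity code $\mathcal{P}_3$ and construct $\mathbf{H}_{row}$ as a Vandermonde-type matrix with columns $(1, a_i, a_i^2)^{T}$ for pairwise distinct $a_1, \ldots, a_n \in \mathbb{F}_q$. The pairwise distinctness of the $a_i$'s automatically makes $\mathcal{C}_{row}$ an $[n, n-3, 4]$ MDS code, so that every $3$ columns of $\mathbf{H}_{row}$ are linearly independent (requirement (i)). It then remains only to choose the $a_i$'s so as to ensure requirement (ii): the pseudo-parity check matrix $\mathbf{H} \in \mathbb{F}_q^{(n+9) \times 3n}$ restricted to any erasure pattern of type $E_0$ has full column rank $12$.

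To obtain an explicit polynomial condition on the $a_i$'s, I perform the same column reduction as in the proof of Theorem~\ref{upbT(1,4,0)} (cf.\ equation~(\ref{simpleppcm})). This turns $\mathbf{H}|_{E_0}$ into a block-triangular matrix whose full-rankness is equivalent to a $9 \times 6$ tail block $\mathbf{B}^{(0)}$ having rank exactly $6$; equivalently, some $6 \times 6$ minor $\mathbf{B}'_0$ must have non-zero determinant. The determinant $g(a_{i_1}, \ldots, a_{i_6})$ is a multivariate polynomial in the six variables associated with the erased columns. Since each $a_i$ appears in at most $|U_{E_0}| - 1 = 2$ columns of $\mathbf{B}^{(0)}$, and since the Vandermonde structure makes each entry a polynomial of degree at most $2$ in $a_i$, the degree $\deg_{x_i}(g)$ is bounded by an absolute constant $d_0$. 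The non-vanishing of $g$ as a polynomial follows from Theorem~\ref{upbGMR}, which furnishes an MR instantiation over some (possibly huge) field and hence a concrete non-vanishing assignment.

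The final step is a hypergraph independent set argument exactly analogous to the one in the proof of Theorem~\ref{upbT(1,4,0)}. I form the $6$-uniform hypergraph $\mathcal{H}$ on vertex set $\mathbb{F}_q$ in which $\{v_1, \ldots, v_6\}$ forms a hyperedge iff $g(v_{\pi(1)}, \ldots, v_{\pi(6)}) = 0$ for some $\pi \in S_6$. Fixing five of the six positions leaves a univariate polynomial of degree at most $d_0$ in the last, so the maximum $5$-degree of $\mathcal{H}$ satisfies $\Delta_5(\mathcal{H}) \leq 6! \cdot d_0 = O(1)$. Invoking Theorem~\ref{KMV14} with $r = 5$ gives
\begin{equation*}
\alpha(\mathcal{H}) \geq c_5 \left(\frac{q}{\Delta_5} \log \frac{q}{\Delta_5}\right)^{1/5},
\end{equation*}
which is at least $n$ whenever $q > C_2 n^5 / \log n$ for a suitable constant $C_2 \geq (10/c_5)^5$. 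Any such independent set of size $\geq n$ then supplies the desired $\{a_i\}_{i \in [n]} \subseteq \mathbb{F}_q$, and the resulting $\mathcal{C} = \mathcal{P}_3 \otimes \mathcal{C}_{row}$ is an MR instantiation of $T_{3 \times n}(1, 3, 0)$.

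The main obstacle will be the explicit bookkeeping that establishes $\deg_{x_i}(g) \leq d_0$ for an absolute constant $d_0$. This is a bit more intricate than in the $b = 2$ setting because the column vectors of $\mathbf{H}_{row}$ now carry a quadratic entry $a_i^2$, so each entry of $\mathbf{B}^{(0)}$ can contribute a higher power of $a_i$; however, the combinatorial fact that each variable appears in at most two columns of $\mathbf{B}^{(0)}$ (inherited from $|U_{E_0}| = 3$) still forces the per-variable degree to be a fixed constant. Once this degree bound is verified, the rest of the argument proceeds exactly as in the proof of Theorem~\ref{upbT(1,4,0)}.
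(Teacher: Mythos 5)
Your proposal tracks the paper's own proof of Theorem~\ref{upbT(1,3,0)} quite closely: fix $\mathcal{C}_{col}=\mathcal{P}_3$, parametrize $\mathbf{H}_{row}$ as a Vandermonde matrix, reduce to a single regular irreducible pattern $E_0$, derive a polynomial condition on $(a_1,\ldots,a_6)$, and apply the hypergraph independent set bound of Theorem~\ref{KMV14} with $r=5$. The degree bookkeeping and the final application of the Kostochka--Mubayi--Verstra\"{e}te bound are exactly what the paper does.

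The one genuine gap is your justification that the minor determinant $g$ is not the zero polynomial. You invoke Theorem~\ref{upbGMR} to furnish a non-vanishing assignment, but that theorem produces an MR instantiation of the topology whose row parity check matrix need not be of Vandermonde form $(1,a_i,a_i^2)^T$. An MR instantiation over a huge field certifies that some valuation of all $3n$ entries of $\mathbf{H}_{row}$ makes the relevant $6\times 6$ minor nonsingular, but it does not hand you an assignment of the $n$ Vandermonde parameters $a_1,\ldots,a_n$ doing so, which is what you need in order to conclude that $g$ is a non-zero element of $\mathbb{F}_q[a_1,\ldots,a_6]$. Moreover, without pinning down which $6\times 6$ minor you take, a carelessly chosen minor could be identically zero (several of them are, because $\mathbf{B}^{(0)}$ has many structural zeros). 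The paper sidesteps both issues by performing the row/column reduction explicitly, exhibiting the simplified block-diagonal matrix and reading off the full-rank condition as a concrete polynomial
\begin{equation*}
f(x_1,\ldots,x_6)=(x_1-x_2)(x_3-x_4)\bigl[(x_1-x_6)(x_2-x_6)(x_3-x_5)(x_4-x_6)-(x_1-x_5)(x_2-x_5)(x_3-x_6)(x_4-x_6)\bigr],
\end{equation*}
which is visibly non-zero. Your framework is otherwise sound and the gap is fixable (either by doing the explicit computation or by isolating a monomial with a non-zero coefficient in a specific minor), but as written the non-vanishing step does not follow from the cited theorem.
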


\begin{proof}[Sketch of the proof]
Since the idea of the proof is the same as that of Theorem \ref{upbT(1,4,0)}, we only sketch the main steps here.

Let $\mathcal{C}_{col}$ be the simple parity code $\mathcal{P}_3$, we are going to construct a $3\times n$ matrix $\mathbf{H}_{row}$ such that:
\begin{enumerate}
  \item[(i)] Every $3$ distinct columns of $\mathbf{H}_{row}$ are linearly independent.
  \item[(ii)] For each erasure pattern $E$ of type $E_0$, the \emph{pseudo-parity check matrix} $\mathbf{H}\in\mathbb{F}_q^{(n+9)\times 3n}$ of $\mathcal{P}_3\otimes\mathcal{C}_{row}$ satisfies: $rank(\mathbf{H}|_{E})=12$.
\end{enumerate}

Suppose there exists an objective matrix $\mathbf{A}_{0}$ of the form
\begin{equation*}
\mathbf{A}_{0}=\left(\begin{array}{ccccc}
1 & 1 & 1 & \cdots & 1 \\
a_1 & a_2 & a_3 & \cdots & a_n \\
a_1^2 & a_2^2 & a_3^2 & \cdots & a_n^2 \\
\end{array}
\right),
\end{equation*}
where $\{a_i\}_{i\in[n]}$ are pairwise distinct elements in $\mathbb{F}_q$. Then we have the corresponding \emph{pseudo-parity check matrix}
\begin{equation*}
\mathbf{H_{A_0}}|_{E_0}=\left(\begin{array}{cc}
%\mathbf{H}_L^1   &  \mathbf{H}_L^2   & \ldots  &  \mathbf{H}_L^{l}\\[1mm]
\mathbf{I}_{6}  &  \mathbf{0}_{6 \times 6}\\[1mm]
\mathbf{A}_{9 \times 6}  &  \mathbf{B}_{9\times 6}\\[1mm]
\mathbf{0}_{(n-6)\times 6}  &  \mathbf{0}_{(n-6)\times 6}\\[1mm]
\end{array}
\right),
\end{equation*}
where
\begin{equation*}
\mathbf{A}=
\left(\begin{array}{cccccc}
1 & 1 & 1 & 1 & & \\
a_1 & a_2 & a_3 & a_4 & & \\
a_1^2 & a_2^2 & a_3^2 & a_4^2 & & \\
& & & & 1 & 1 \\
& & & & a_5 & a_6 \\
& & & & a_5^2 & a_6^2 \\
& & & & & \\
& \multicolumn{4}{c}{\raisebox{1ex}[0pt]{$\mathbf{0}_{3\times 6}$}} &\\
\end{array}
\right)~\text{and}~
\mathbf{B}=
\left(\begin{array}{cccccc}
-1 & -1 & -1 & -1 & & \\
-a_1 & -a_2 & -a_3 & -a_4 & & \\
-a_1^2 & -a_2^2 & -a_3^2 & -a_4^2 & & \\
1 & 1 & & & -1 & -1\\
a_1 & a_2 & & & -a_5 & -a_6 \\
a_1^2 & a_2^2 & & & -a_5^2 & -a_6^2 \\
& & 1 & 1 & 1 & 1\\
& & a_3 & a_4 & a_5 & a_6\\
& & a_3^2 & a_4^2 & a_5^2 & a_6^2\\
\end{array}
\right).
\end{equation*}
Since $\mathbf{B}$ can be simplified as
\begin{equation*}
\mathbf{B}=
\left(\begin{array}{cccccc}
1 & & & & & \\
& 1 & & & & \\
& & a_2-a_1 & & a_1-a_5 & a_1-a_6 \\
& & a_2^2-a_1^2 & & a_1^2-a_5^2 & a_1^2-a_6^2 \\
& & & a_4-a_3 & a_5-a_3 & a_6-a_3 \\
& & & a_4^2-a_3^2 & a_5^2-a_3^2 & a_6^2-a_3^2 \\
& & & & & \\
& \multicolumn{4}{c}{\raisebox{1ex}[0pt]{$\mathbf{0}_{3\times 6}$}} &\\
\end{array}
\right),
\end{equation*}
thus we have
\begin{itemize}
  \item $rank(\mathbf{B})=6$ if and only if $f(a_1,\ldots,a_6)\neq 0$, where \begin{equation*}
      f(x_1,\ldots,x_6)=(x_1-x_2)(x_3-x_4)[(x_1-x_6)(x_2-x_6)(x_3-x_5)(x_4-x_6)-(x_1-x_5)(x_2-x_5)(x_3-x_6)(x_4-x_6)].
      \end{equation*}
\end{itemize}
Let $\mathcal{H}$ be a $6$-uniform hypergraph with vertex set $\mathbb{F}_q$, each set of $6$ vertices $\{v_{1},\ldots,v_{6}\}$ forms a $6$-hyperedge if and only if $f(v_{\pi(1)},\ldots,v_{\pi(6)})=0$ for some $\pi \in S_{6}$. If there exists an independent set $I\subseteq\mathbb{F}_q$ such that $|I|\geq n$, then we can construct an objective matrix $\mathbf{A}_{0}$ by arbitrarily choosing $n$ different vertices from $I$ as $\{a_i\}_{i\in[n]}$.

Since $deg_{x_i}(f)\leq 2$ for each $x_i$, and $f(v_1,\ldots,x_i,\ldots,v_6)$ is a non-zero polynomial for any 5-subset $\{v_{j}\}_{j\in[6]\setminus\{i\}}\in \mathbb{F}_q$, we have $\Delta_{5}(\mathcal{H})\leq 2\cdot6!$. Thus, by Theorem \ref{KMV14},
\begin{equation*}
\alpha(\mathcal{H})\geq \frac{c_5}{5}(q\log{q})^{\frac{1}{5}}>n,
\end{equation*}
therefore, there exists a subset $A=\{a_1,\ldots,a_n\}\subseteq \mathbb{F}_q$ such that the corresponding Vandermonde matrix $\mathbf{A}_0$
is the objective parity check matrix of the row code $\mathcal{C}_{row}$.
\end{proof}

\subsubsection{Lower bound}~

The next theorem gives a linear lower bound on the smallest field size required for the existence of an MR code for topology $T_{3\times n}(1,3,0)$.

\begin{theorem}\label{lowbT(1,3,0)}
If $q<\frac{\sqrt{n^2-11n+34}}{2}$, then for any tensor product code $\mathcal{C}=\mathcal{C}_{col}\otimes\mathcal{C}_{row}$ over $\mathbb{F}_q$ with $\mathcal{C}_{col}$ as a $[3,2,2]$ MDS code and $\mathcal{C}_{row}$ as an $[n,n-3,4]$ MDS code, $\mathcal{C}$ can not be an MR code that instantiates the topology $T_{3\times n}(1,3,0)$.
\end{theorem}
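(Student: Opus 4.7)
My plan is to follow the blueprint of the proof of Theorem~\ref{lowbT(1,4,0)}: parameterize an arbitrary candidate tensor product code, restrict its pseudo-parity check matrix to the unique regular irreducible erasure pattern $E_0$, extract the algebraic condition for correctability, and then argue that when $q$ is too small this condition is forced to fail for some six columns of $\mathbf{H}_{row}$. Write $\mathbf{H}_{col}=(a_1,a_2,a_3)$ and $\mathbf{H}_{row}=(\beta_1\mid\cdots\mid\beta_n)\in\mathbb{F}_q^{3\times n}$ for the two parity-check matrices. Since $\mathcal{C}_{col}$ is $[3,2,2]$ MDS each $a_i\neq 0$, and since $\mathcal{C}_{row}$ is $[n,n-3,4]$ MDS any three of the $\beta_i$'s are linearly independent, so the $\beta_i$'s form an $n$-arc in the projective plane $\mathbb{P}^2(\mathbb{F}_q)$.

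\textbf{Reducing the rank problem to a concurrency condition.} Restricting the pseudo-parity check matrix $\mathbf{H}_{(1,3,0)}$ to the twelve positions of $E_0$ and then pivoting on six columns of the top $\mathbf{H}_{col}$-block (exactly as in the proof of Theorem~\ref{lowbT(1,4,0)}) reduces the problem of full column rank to the question of whether a $9\times 6$ matrix $\mathbf{B}$, whose three row-blocks depend only on $\{\beta_1,\dots,\beta_6\}$ and on $a_1,a_2,a_3$, has rank $6$. A direct computation reveals the key cancellation
\[
\text{row-block 1 of }\mathbf{B}=-\tfrac{a_2}{a_1}\cdot(\text{row-block 2})-\tfrac{a_3}{a_1}\cdot(\text{row-block 3}),
\]
so that $\operatorname{rank}(\mathbf{B})=\operatorname{rank}(\mathbf{B}')$ where $\mathbf{B}'$ is the $6\times 6$ matrix formed by row-blocks $2$ and $3$. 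A Laplace expansion of $\det(\mathbf{B}')$ along the top three rows---only the two column partitions $\{1,2,5\}\,|\,\{3,4,6\}$ and $\{1,2,6\}\,|\,\{3,4,5\}$ are non-degenerate---yields
\[
\det(\mathbf{B}')=\pm\tfrac{a_3}{a_2}\bigl(D_{125}\,D_{346}-D_{126}\,D_{345}\bigr),\qquad D_{ijk}:=\det[\beta_i\mid\beta_j\mid\beta_k].
\]
Since every $D_{ijk}\neq 0$ by MDS, the intersection of $\overline{\beta_1\beta_2}$ with the line through $\beta_k,\beta_\ell$ is the projective point $D_{12\ell}\beta_k-D_{12k}\beta_\ell$; comparing the cases $(k,\ell)=(5,6)$ and with $\overline{\beta_3\beta_4}$ shows that the Pl\"ucker identity $D_{125}D_{346}=D_{126}D_{345}$ is equivalent to the three projective lines $\overline{\beta_1\beta_2}$, $\overline{\beta_3\beta_4}$ and $\overline{\beta_5\beta_6}$ being concurrent in $\mathbb{P}^2(\mathbb{F}_q)$. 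Hence an $E_0$-instance on six arc columns is uncorrectable iff the three pair-lines through the three pairs are concurrent.

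\textbf{Forcing a bad configuration by a pencil argument.} Assume $\mathcal{C}$ is MR and fix any four columns, say $\beta_1,\beta_2,\beta_3,\beta_4$. Let $T=\overline{\beta_1\beta_2}\cap\overline{\beta_3\beta_4}\in\mathbb{P}^2(\mathbb{F}_q)$; since no three $\beta_j$'s are collinear, $T$ cannot coincide with any $\beta_j$. The pencil through $T$ consists of $q+1$ projective lines, two of which ($\overline{\beta_1\beta_2}$ and $\overline{\beta_3\beta_4}$) are already occupied by pairs from the arc. If some other line through $T$ contained two of the remaining arc points $\beta_i,\beta_j\in\{\beta_5,\ldots,\beta_n\}$, then the three pair-lines $\overline{\beta_1\beta_2}$, $\overline{\beta_3\beta_4}$, $\overline{\beta_i\beta_j}$ would be concurrent at $T$; the six indices $\{1,2,3,4,i,j\}$ are distinct, so the resulting $E_0$-pattern would be uncorrectable, contradicting MR. Therefore each of the $q-1$ non-occupied lines through $T$ contains at most one of $\beta_5,\ldots,\beta_n$, and the map sending $\beta_k$ ($k\geq 5$) to the unique line $\overline{T\beta_k}$ is an injection into these $q-1$ lines, giving $n-4\leq q-1$, i.e., $q\geq n-3$.

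\textbf{Deducing the stated bound.} Since $E_0$ involves six columns the theorem is meaningful only for $n\geq 6$, and in that range the inequality $(n-3)^2\geq \tfrac14(n^2-11n+34)$ is equivalent to $3n^2-13n+2\geq 0$, which clearly holds. Therefore $q\geq n-3\geq \tfrac12\sqrt{n^2-11n+34}$, yielding the claimed bound. The main technical difficulty is the bookkeeping in the second step: verifying the row-block cancellation that collapses $\mathbf{B}$ from $9\times 6$ to the $6\times 6$ matrix $\mathbf{B}'$, and then correctly identifying $\det(\mathbf{B}')$ with the Pl\"ucker identity $D_{125}D_{346}=D_{126}D_{345}$ governing projective concurrency. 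Once this algebraic translation is in place, the pencil forcing argument in the third step is entirely geometric and requires no further computation.
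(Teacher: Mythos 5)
Your proposal is correct and takes a genuinely different route from the paper.

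The paper's proof (i) deletes the linearly dependent first row-block of $\mathbf{B}$ exactly as you do, (ii) splits the columns of $\mathbf{H}_2$ by whether the first coordinate vanishes so that the remaining ones can be scaled to the affine form $(1,r_{j1},r_{j2})^T$, (iii) simplifies the resulting $4\times 4$ block, observes that a triple of equal difference vectors $\gamma_2-\gamma_1=\gamma_4-\gamma_3=\gamma_6-\gamma_5$ forces a rank drop, and (iv) invokes a generalized 2-Sidon set bound in $\mathbb{Z}_N$ transplanted to $\mathbb{F}_q^2$ to force such a repeated difference once $|C_1|>2\sqrt{q^2-15/16}+1/2$. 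Your argument replaces steps (ii)--(iv) with an intrinsically projective one: you package the rank condition as the nonvanishing of the bracket expression $D_{125}D_{346}-D_{126}D_{345}$ (which I have checked agrees, up to the nonzero factor $-a_3/a_2$ and a product of column normalizations, with the relevant $4\times 4$ minor), identify it via a three-term Pl\"ucker relation with the concurrency of the three pair-lines, and then count the pencil of $q+1$ lines through $T=\overline{\beta_1\beta_2}\cap\overline{\beta_3\beta_4}$ to get $n-4\le q-1$. This is cleaner in several respects: it avoids the case split on $|C_2|$ (since projective coordinates absorb the normalization), it makes the underlying geometric obstruction transparent, and it yields the sharper bound $q\ge n-3$, which implies the paper's $q\ge\frac12\sqrt{n^2-11n+34}$ for all $n\ge 6$ as you verify. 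The paper's own condition $\gamma_2-\gamma_1=\gamma_4-\gamma_3=\gamma_6-\gamma_5$ is of course just the special case of concurrency at a point at infinity, which is why the Sidon count loses a factor of about $2$ relative to your pencil count. (One observation worth recording: $q\ge n-3$ is still slightly weaker than the classical arc bound $n\le q+2$ that already follows from $\mathcal{C}_{row}$ being an $[n,n-3,4]$ MDS code, which the paper itself notes its Theorem~\ref{lowbT(1,3,0)} does not beat; your argument narrows that gap to a single unit and clarifies its geometric source.)
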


\begin{proof}[Sketch of the proof]
For any $[3,2,2]$ MDS code $\mathcal{C}_1$ and $[n,n-3,4]$ MDS code $\mathcal{C}_2$ , take
\begin{equation*}
\mathbf{H}_1=(a_1,a_2,a_3)~\text{and}~
\mathbf{H}_2=\left(\begin{array}{cccc}
b_{11} & b_{12} & \cdots & b_{1n} \\
b_{21} & b_{22} & \cdots & b_{2n} \\
b_{31} & b_{32} & \cdots & b_{3n} \\
\end{array}
\right)
\end{equation*}
as their parity check matrices. Then the \emph{pseudo-parity check matrix} will be
\begin{equation*}\label{pmT(1,3,0)}
\mathbf{H}=\left(\begin{array}{ccc}
%\mathbf{H}_L^1   &  \mathbf{H}_L^2   & \ldots  &  \mathbf{H}_L^{l}\\[1mm]
a_1\cdot\mathbf{I}_n  &  a_2\cdot\mathbf{I}_n  &  a_3\cdot\mathbf{I}_n \\ %[1mm]
\mathbf{H}_{2}  &  \mathbf{0}  &  \mathbf{0}  \\%[1mm]
\mathbf{0}  &  \mathbf{H}_{2}  &  \mathbf{0}  \\%[1mm]
\mathbf{0}  &  \mathbf{0}  &  \mathbf{H}_{2}  \\%[1mm]
\end{array}
\right).
\end{equation*}

Take $\mathbf{H}_1=(a_1,a_2,a_3)$ as a vector in $\mathbb{F}_q$ and consider its Hamming weight $w(\mathbf{H}_1)$. Similarly, from the MDS property of $\mathcal{C}_1$, we have $w(\mathbf{H}_1)=3$. Thus we have
%With the same proof as that of Theorem ~\ref{lowbT(1,4,0)}, if $w(\mathbf{H}_1)\leq 2$, $rank(\mathbf{H}|_{E_s})\leq 8+3=11<12$. Hence, the resulting code $\mathcal{C}$ can not correct erasure patterns of type $E_0$.
%When $w(\mathbf{H}_1)=3$, we have
\begin{equation*}
\mathbf{H}|_{E_0}=\left(\begin{array}{cc}
\mathbf{A'}_{6 \times 6}  &  \mathbf{0}_{6 \times 6}\\[1mm]
\mathbf{A}_{9 \times 6}  &  \mathbf{B}_{9\times 6}\\[1mm]
\mathbf{0}_{(n-6)\times 6}  &  \mathbf{0}_{(n-6)\times 6}\\[1mm]
\end{array}
\right),
\end{equation*}
where
\begin{equation*}
\mathbf{A'}=
\left(\begin{array}{@{\hspace{-1pt}}cccccc@{\hspace{-1pt}}}
a_1 & & & & & \\
& a_1 & & & & \\
& & a_1 & & & \\
& & & a_1 & & \\
& & & & a_2 & \\
& & & & & a_2\\
\end{array}
\right)~\text{and}~
\mathbf{B}=
\left(\begin{array}{@{\hspace{-1pt}}cccccc@{\hspace{-1pt}}}
-\frac{a_2}{a_1}\beta_1 & -\frac{a_2}{a_1}\beta_2 & -\frac{a_3}{a_1}\beta_3 & -\frac{a_3}{a_1}\beta_4 & & \\
\beta_1 & \beta_2 & & & -\frac{a_3}{a_2}\beta_5 & -\frac{a_3}{a_2}\beta_6\\
& & \beta_3 & \beta_4 & \beta_5 & \beta_6\\
\end{array}
\right),
\end{equation*}
for the column vectors $\{\beta_1,\ldots,\beta_6\}\in \mathbf{H}_2$ corresponding to $E_0$.

Different from the case for topology $T_{4\times n}(1,2,0)$, first, we shall divide the columns of $\mathbf{H}_2$ into the following two parts.
\begin{itemize}
  \item Part one: $(b_{1i},b_{2i},b_{3i})^{T}\in \mathbf{H}_2$ with $b_{1i}\neq0$;
  \item Part two: $(b_{1i},b_{2i},b_{3i})^{T}\in \mathbf{H}_2$ with $b_{1i}=0$.
\end{itemize}

Denote the set of columns in Part two as $C_2$, if $|C_2|\geq 6$, we can choose six distinct column vectors from $C_2$ as the corresponding $\{\beta_1,\ldots,\beta_6\}$ in $\mathbf{B}$. Since $\beta_j(1)=0$ for each $j\in[6]$, we have $rank(\mathbf{B})\leq 4$. Thus $\mathcal{C}$ can not correct all erasure patterns of type $E_0$ if $|C_2|\geq 6$.

Assume $|C_2|\leq 5$. Denote the set of columns in Part one as $C_1$. Assume $|C_1|\geq 6$, take any six distinct column vectors from $C_1$ as the corresponding $\{\beta_1,\ldots,\beta_6\}$ in $\mathbf{B}$. Similarly, after some scaling process for columns in $\mathbf{B}$, we can get a matrix $\tilde{\mathbf{B}}$ of the form
\begin{equation*}
\tilde{\mathbf{B}}=
\left(\begin{array}{@{\hspace{-1pt}}cccccc@{\hspace{-1pt}}}
\left(\begin{array}{@{\hspace{-1pt}}c@{\hspace{-1pt}}} 1 \\ r_{11} \\ r_{12}\end{array}\right) &
\left(\begin{array}{@{\hspace{-1pt}}c@{\hspace{-1pt}}} 1 \\ r_{21} \\ r_{22}\end{array}\right) & & &
-\frac{a_3}{a_2}\cdot\left(\begin{array}{@{\hspace{-1pt}}c@{\hspace{-1pt}}} 1 \\ r_{51} \\ r_{52}\end{array}\right) &
-\frac{a_3}{a_2}\cdot\left(\begin{array}{@{\hspace{-1pt}}c@{\hspace{-1pt}}} 1 \\ r_{61} \\ r_{62}\end{array}\right) \\
& &
\left(\begin{array}{@{\hspace{-1pt}}c@{\hspace{-1pt}}} 1 \\ r_{31} \\ r_{32}\end{array}\right) &
\left(\begin{array}{@{\hspace{-1pt}}c@{\hspace{-1pt}}} 1 \\ r_{41} \\ r_{42}\end{array}\right) &
\left(\begin{array}{@{\hspace{-1pt}}c@{\hspace{-1pt}}} 1 \\ r_{51} \\ r_{52}\end{array}\right) &
\left(\begin{array}{@{\hspace{-1pt}}c@{\hspace{-1pt}}} 1 \\ r_{61} \\ r_{62}\end{array}\right)\\
\end{array}
\right),
\end{equation*}
such that $rank(\mathbf{B})=rank(\tilde{\mathbf{B}})$, where $\beta_j=\beta_j(1)\cdot(1,r_{j1},r_{j2})^T$ for each $j\in[6]$. And $\tilde{\mathbf{B}}$ can be simplified as
\begin{equation*}
\tilde{\mathbf{B}}=\left(\begin{array}{cccccc}
1 & & & & & \\
& 1 & & & & \\
& & \gamma_2-\gamma_1 & & \frac{a_3}{a_2}(\gamma_1-\gamma_5) & \frac{a_3}{a_2}(\gamma_1-\gamma_6) \\
& & & \gamma_4-\gamma_3 & \gamma_5-\gamma_3 & \gamma_6-\gamma_3 \\
\end{array}
\right),
\end{equation*}
where $\gamma_j=(r_{j1},r_{j2})^T$ for each $j\in[6]$.

In order to show that the tensor product code $\mathcal{C}$ can't correct all erasure patterns of type $E_0$, we need to prove that if $q$ isn't large enough, there will always be six distinct columns $\{b_{j1}\cdot(1,\gamma_j)^{T}\}_{j\in[6]}$ in $C_1$ with the resulting $\tilde{\mathbf{B}}$ having rank less than 6.

Let $A\subseteq \mathbb{F}_q^2$ such that for any $\{a_1,a_2\}\subseteq A$, there exists at most one other 2-subset $\{a_3,a_4\}\subseteq A$ different from $\{a_1,a_2\}$ satisfying $a_2-a_1=a_4-a_3$.\footnote{This subset $A$ here can be viewed as a generalized \emph{2-Sidon} set over the vector space.}
%Since any three vectors in $\mathbb{F}_q^2$ are linearly dependent, thus from the description of $A$, regardless of the scale, each direction in $\mathbb{F}_q^2$ appears at most $2|A|$ times in $(A-A)\setminus \{\mathbf{0}\}$.
Thus we have
\begin{equation*}
{|A|\choose 2}\leq 2\cdot(q^2-1),
\end{equation*}
and $|A|\leq 2\sqrt{q^2-\frac{15}{16}}+\frac{1}{2}$.

Since $q<\frac{\sqrt{n^2-11n+34}}{2}$, we have $|C_1|=n-|C_2|\geq n-5>2\sqrt{q^2-\frac{15}{16}}+\frac{1}{2}$. Therefore, there exist six distinct columns $\{b_{j1}\cdot(1,\gamma_j)^{T}\}_{j\in[6]}$ in $C_1$ such that $\gamma_2-\gamma_1=\gamma_4-\gamma_3=\gamma_6-\gamma_5$ and the matrix $\tilde{\mathbf{B}}$ corresponding to $\{b_{j1}\cdot(1,\gamma_j)^{T}\}_{j\in[6]}$ has $rank(\tilde{\mathbf{B}})=5<6$. Thus $\mathcal{C}$ can not correct all erasure patterns of type $E_0$.

\end{proof}

\section{Concluding Remarks and Further Research}

In this paper, we obtain a polynomial upper bound on the minimal size of fields required for the existence of MR codes that instantiate the topology $T_{m\times n}(1,b,0)$, which improves the general upper bound given by Gopalan et al. \cite{Gopalan2014}. We also consider some special cases with fixed $m$ and $b$, for each of which, we obtain an improved upper bound and a non-trivial lower bound. Though many works have been done, there is still a wide range of questions that remain open. Here we highlight some of the questions related to our work.

\begin{itemize}
  \item Due to the rough estimation on the number of regular erasure patterns, the upper bound given by Theorem \ref{upbT(1,b,0)} still grows exponentially with $m$. If one can give a better characterization of the regular erasure patterns (probably using tools from extremal graph theory), we believe the general upper bound in Theorem \ref{upbT(1,b,0)} can also be improved.
  \item As for the lower bound, we only considered the two simplest cases. For general case, due to the complexity of the erasure patterns, our method might not work. Therefore, a general non-trivial lower bound on the field size of codes achieving the MR property for topologies $T_{m\times n}(1,b,0)$ remains widely open.
  \item Under the limitations of the methods themselves, the Combinatorial Nullstellensatz and the hypergraph independent set approach can only give existence results. Therefore, explicit constructions of MR codes for topologies $T_{m\times n}(1,b,0)$ over small fields are still interesting open problems. In particular, is it possible to give an explicit construction of MR codes for the topology $T_{4\times n}(1,2,0)$ over a field of size between $\Omega(n^2)$ and $\mathcal{O}(n^5/\log (n))$?
  \item Unfortunately, the lower bound given by Theorem \ref{lowbT(1,3,0)} doesn't beat the lower bound $q\geq n-1$ or $q\geq n-2$ given by the \emph{MDS conjecture}. But considering the matrix $\tilde{\mathbf{B}}$ in the proof, the MDS property already ensures $\tilde{\mathbf{B}}$ has full rank when restricting to most of the columns. Thus, probably the upper bound given by Theorem \ref{upbT(1,3,0)} can be improved to be a linear function of $n$.
\end{itemize}

% Generated by IEEEtranS.bst, version: 1.13 (2008/09/30)

%\begin{appendix}
%
%\end{appendix}


\begin{thebibliography}{10}
\providecommand{\url}[1]{#1}
\csname url@samestyle\endcsname
\providecommand{\newblock}{\relax}
\providecommand{\bibinfo}[2]{#2}
\providecommand{\BIBentrySTDinterwordspacing}{\spaceskip=0pt\relax}
\providecommand{\BIBentryALTinterwordstretchfactor}{4}
\providecommand{\BIBentryALTinterwordspacing}{\spaceskip=\fontdimen2\font plus
\BIBentryALTinterwordstretchfactor\fontdimen3\font minus
  \fontdimen4\font\relax}
\providecommand{\BIBforeignlanguage}[2]{{%
\expandafter\ifx\csname l@#1\endcsname\relax
\typeout{** WARNING: IEEEtranS.bst: No hyphenation pattern has been}%
\typeout{** loaded for the language `#1'. Using the pattern for}%
\typeout{** the default language instead.}%
\else
\language=\csname l@#1\endcsname
\fi
#2}}
\providecommand{\BIBdecl}{\relax}
\BIBdecl

\bibitem{Alon1999}
N.~Alon, ``Combinatorial nullstellensatz,'' \emph{Combinat. Probab. Comput.}, vol.~8, no.~1--2, pp. 7--29, 1999.

\bibitem{AKPSS1982}
M.~Ajtai, J.~Koml\'{o}s, J.~Pintz, J.~Spencer, and E.~Szemer\'{e}di, ``Extremal
uncrowded hypergraphs,'' \emph{J. Combin. Theory Ser. A}, vol.~32, no.~3, pp.
321--335, 1982.

\bibitem{AKS1980}
M.~Ajtai, J.~Koml\'{o}s, and E.~Szemer\'{e}di, ``A note on Ramsey numbers,''
\emph{J. Combin. Theory Ser. A}, vol.~29, no.~3, pp. 354--360, 1980.

\bibitem{Balaji16071}
S.~B. Balaji, K.~P. Prasanth, and P.~V. Kumar, ``Binary codes with locality for
  multiple erasures having short block length,'' in \emph{IEEE International
  Symposium on Information Theory}, pp. 655--659, 2016.

\bibitem{Barg2017}
A.~Barg, I.~Tamo, and S.~Vl\u{a}du\c{t}, ``Locally recoverable codes on
  algebraic curves,'' \emph{IEEE Trans. Inform. Theory}, vol.~63, no.~8, pp.
  4928--4939, 2017.

\bibitem{CMbound}
V.~Cadambe and A.~Mazumdar, ``Bounds on the size of locally recoverable
  codes,'' \emph{IEEE Trans. Inform. Theory}, vol.~61, no.~11, pp. 5787--5794,
  2015.

\bibitem{CHL07}
M. Chen, C. Huang, and J. Li, ``On the maximally recoverable property for multi-protectiongroup codes,'' in \emph{IEEE International Symposium on Information Theory}, pp. 486--490, 2007.

\bibitem{DGWR07}
A.~G.~Dimakis, P.~B.~Godfrey, M.~Wainwright, and K.~Ramchandran, ``Network coding for distributed storage systems,''
in \emph{INFOCOM, 2007 Proceedings IEEE}, pp. 2000--2008, 2007.

\bibitem{DLR1995}
R.~Duke, H.~Lefmann, and V.~R\"{o}dl, ``On uncrowded hypergraphs,'' \emph{Random Structures Algorithms}, vol.~6,
no.~2-3, pp. 209--212, 1995.

\bibitem{2}
M.~Forbes and S.~Yekhanin, ``On the locality of codeword symbols in non-linear
  codes,'' \emph{Discrete Mathematics}, vol. 324, pp. 78--84, 2014.

\bibitem{Fu2017}
Q.~Fu, R.~Li, L.~Guo, and L.~Lv, ``Locality of optimal binary codes,''
  \emph{Finite Fields Appl.}, vol.~48, pp. 371--394, 2017.

\bibitem{GGTZ17}
V. Gandikota, E. Grigorescu, C. Thomas, and M. Zhu, ``Maximally recoverable codes: The bounded case,'' in \emph{55th Annual Allerton Conference on Communication, Control, and Computing}, pp. 1115--1122, 2017.

\bibitem{Gopalan2014}
P.~Gopalan, C.~Huang, B.~Jenkins, and S.~Yekhanin, ``Explicit maximally
  recoverable codes with locality,'' \emph{IEEE Trans. Inform. Theory},
  vol.~60, no.~9, pp. 5245--5256, 2014.

\bibitem{Gopalan12}
P.~Gopalan, C.~Huang, H.~Simitci, and S.~Yekhanin, ``On the locality of
  codeword symbols,'' \emph{IEEE Trans. Inform. Theory}, vol.~58, no.~11, pp.
  6925--6934, 2012.

\bibitem{Gopalan2017}
P.~Gopalan, G.~Hu, S.~Kopparty, S.~Saraf, C.~Wang, and S.~Yekhanin, ``Maximally recoverable codes for grid-like topologies,'' in \emph{Proceedings of the Twenty-Eighth Annual ACM-SIAM Symposium on Discrete Algorithms (SODA)}, pp. 2092--2108, 2017.

\bibitem{Gopi2018}
S.~Gopi, V.~Guruswami, and S.~Yekhanin, ``Maximally Recoverable LRCs: A field size lower bound and constructions for few heavy parities,'' \emph{arXiv}, 1710.10322, 2017, \emph{to appear in Symposium on Discrete Algorithms (SODA) 2019}.

\bibitem{Goparaju2014}
S.~Goparaju and R.~Calderbank, ``Binary cyclic codes that are locally
  repairable,'' in \emph{IEEE International Symposium on Information Theory}, pp. 676--680,
  2014.

\bibitem{Hao2016}
J.~Hao, S.~Xia, and B.~Chen, ``Some results on optimal locally repairable
  codes,'' in \emph{IEEE International Symposium on Information Theory},
  pp. 440--444, 2016.

\bibitem{5}
C.~Huang, H.~Simitci, Y.~Xu, A.~Ogus, B.~Calder, P.~Gopalan, J.~Li, S.~Yekhanin
  \emph{et~al.}, ``Erasure coding in windows azure storage,'' in \emph{USENIX
  Annual Technical Conference}, pp. 15--26, 2012.

\bibitem{Huang2016}
P.~Huang, E.~Yaakobi, H.~Uchikawa, and P.~H. Siegel, ``Binary linear locally
  repairable codes,'' \emph{IEEE Trans. Inform. Theory}, vol.~62, no.~11, pp.
  6268--6283, 2016.

\bibitem{KLR17}
D. Kane, S. Lovett, and S. Rao, ``The independence number of the Birkhoff polytope graph, and applications to maximally recoverable codes,'' in \emph{58th IEEE Symposium on Foundations of Computer Science (FOCS)}, pp. 252--259, 2017.

\bibitem{Kostochka2014}
A.~Kostochka, D.~Mubayi, and J.~Verstra\"{e}te, ``On independent sets in hypergraphs,'' \emph{Random Structures Algorithms}, vol.~44, no.~2, pp.
  224--239, 2014.

\bibitem{LN83}
R.~Lidl and N.~Niederreiter, \emph{Finite Fields}. Cambridge University Press, Cambridge, 1983.

\bibitem{Macwilliams1977}
F.~J. Macwilliams and N.~J.~A. Sloane, \emph{The theory of error-correcting
  codes}.\hskip 0.4em plus 0.5em minus 0.4em\relax North-Holland, 1977.

\bibitem{Nam2017}
M.~Y. Nam and H.~Y. Song, ``Binary locally repairable codes with minimum
  distance at least six based on partial $t$ -spreads,'' \emph{IEEE
  Communications Letters}, vol.~21, no.~8, pp. 1683--1686, 2017.

\bibitem{7}
F.~Oggier and A.~Datta, ``Self-repairing homomorphic codes for distributed
  storage systems,'' in \emph{INFOCOM, 2011 Proceedings IEEE}, pp. 1215--1223, 2011.

\bibitem{8}
L.~Pamies-Juarez, H.~D.~L. Hollmann, and F.~Oggier, ``Locally repairable codes
  with multiple repair alternatives,'' in \emph{IEEE International Symposium on
  Information Theory}, pp. 892--896, 2013.

\bibitem{9}
D.~S. Papailiopoulos and A.~G. Dimakis, ``Locally repairable codes,'' in
  \emph{IEEE International Symposium on Information Theory}, pp.
  2771--2775, 2012.

\bibitem{10}
D.~S. Papailiopoulos, J.~Luo, A.~G. Dimakis, C.~Huang, and J.~Li, ``Simple
  regenerating codes: Network coding for cloud storage,'' in \emph{INFOCOM,
  2012 Proceedings IEEE}, pp. 2801--2805, 2012.

\bibitem{11}
N.~Prakash, G.~M. Kamath, V.~Lalitha, and P.~V. Kumar, ``Optimal linear codes
  with a local-error-correction property,'' in \emph{IEEE International
  Symposium on Information Theory}, pp. 2776--2780, 2012.

\bibitem{Prakash1406}
N.~Prakash, V.~Lalitha, and P.~V. Kumar, ``Codes with locality for two
  erasures,'' in \emph{IEEE International Symposium on Information Theory}, pp. 1962--1966,
  2014.

\bibitem{RSK11}
K. V. Rashmi, N. B. Shah, and P. V. Kumar, ``Optimal exact-regenerating codes for distributed storage at the MSR and MBR points via a product-matrix construction,'' \emph{IEEE Trans. Inform. Theory}, vol. 57, no. 68, pp. 5227--5239, 2011.

\bibitem{14}
M.~Sathiamoorthy, M.~Asteris, D.~Papailiopoulos, A.~G. Dimakis, R.~Vadali,
  S.~Chen, and D.~Borthakur, ``{XOR}ing elephants: Novel erasure codes for big
  data,'' in \emph{Proceedings of the VLDB Endowment}, vol.~6, no.~5.\hskip 0.5em
  plus 0.5em minus 0.4em\relax VLDB Endowment, pp. 325--336, 2013.

\bibitem{Shahabinejad2016}
M.~Shahabinejad, M.~Khabbazian, and M.~Ardakani, ``A class of binary locally
  repairable codes,'' \emph{IEEE Transactions on Communications}, vol.~64,
  no.~8, pp. 3182--3193, 2016.

\bibitem{SRLS2018}
D. Shivakrishna, V. Arvind Rameshwar, V. Lalitha, and Birenjith Sasidharan, ``On Maximally Recoverable Codes for Product
Topologies,'' \emph{arXiv}, 1801.03379, 2018.

\bibitem{15}
N.~Silberstein, A.~S. Rawat, O.~Koyluoglu, and S.~Vishwanath, ``Optimal locally
  repairable codes via rank-metric codes,'' in \emph{IEEE International
  Symposium on Information Theory}, pp. 1819--1823, 2013.

\bibitem{Silberstein2015}
N.~Silberstein and A.~Zeh, ``Optimal binary locally repairable codes via
  anticodes,'' in \emph{IEEE International Symposium on Information Theory}, pp. 1247--1251,
  2015.

\bibitem{Song2017}
\BIBentryALTinterwordspacing
W.~Song, K.~Cai, C.~Yuen, K.~Cai, and G.~Han, ``On sequential locally
  repairable codes,'' \emph{IEEE Trans. Inform. Theory}, 2017. [Online].
  Available: \url{http://dx.doi.org/10.1109/TIT.2017.2711611}
\BIBentrySTDinterwordspacing

\bibitem{16}
W.~Song, S.~H. Dau, C.~Yuen, and T.~J. Li, ``Optimal locally repairable linear
  codes,'' \emph{IEEE Journal on Selected Areas in Communications}, vol.~32,
  no.~5, pp. 1019--1036, 2014.

\bibitem{Tamo1406}
I.~Tamo and A.~Barg, ``Bounds on locally recoverable codes with multiple
  recovering sets,'' in \emph{IEEE International Symposium on Information
  Theory}, pp. 691--695, 2014.

\bibitem{Tamo1408}
------, ``A family of optimal locally recoverable codes,'' \emph{IEEE Trans.
  Inform. Theory}, vol.~60, no.~8, pp. 4661--4676, 2014.

%\bibitem{Tamo2016}
%I.~Tamo, A.~Barg, S.~Goparaju, and R.~Calderbank, ``Cyclic {LRC} codes, binary
%  {LRC} codes, and upper bounds on the distance of cyclic codes,'' \emph{Int.
%  J. Inf. Coding Theory}, vol.~3, no.~4, pp. 345--364, 2016.

\bibitem{18}
I.~Tamo, D.~S. Papailiopoulos, and A.~G. Dimakis, ``Optimal locally repairable
  codes and connections to matroid theory,'' in \emph{IEEE International
  Symposium on Information Theory}, pp. 1814--1818, 2013.

\bibitem{Wang1411}
A.~Wang and Z.~Zhang, ``Repair locality with multiple erasure tolerance,''
  \emph{IEEE Trans. Inform. Theory}, vol.~60, no.~11, pp. 6979--6987, 2014.

\bibitem{Wang1510}
------, ``An integer programming-based bound for locally repairable codes,''
  \emph{IEEE Trans. Inform. Theory}, vol.~61, no.~10, pp. 5280--5294, 2015.

\bibitem{WDK07}
Y. Wu, A. Dimakis, and K. Ramchandran, ``Deterministic regenerating codes for distributed storage,''
in \emph{Proceedings of Allerton Conference on Control, Computing and Communication}, pp. 1--5, 2007.

\bibitem{WD09}
Y. Wu and A. Dimakis, ``Reducing repair traffic for erasure coding-based storage via interference alignment,''
in \emph{IEEE International Symposium on Information Theory}, pp. 2276--2280, 2009.

%\bibitem{Wang16isit}
%A.~Wang, Z.~Zhang, and D.~Lin, ``Two classes of $(r, t)$-locally repairable
%  codes,'' in \emph{IEEE International Symposium on Information Theory}, 2016,
%  pp. 445--449.

%\bibitem{Wang2017}
%------, ``Bounds and constructions for linear locally repairable codes over
%  binary fields,'' in \emph{IEEE International Symposium on Information
%  Theory}, 2017, pp. 2033--2037.

%\bibitem{Zeh2015}
%A.~Zeh and E.~Yaakobi, ``Optimal linear and cyclic locally repairable codes
%  over small fields,'' in \emph{Information Theory Workshop}, 2015, pp. 1--5.

\end{thebibliography}
\end{document}